\documentclass[3p]{elsarticle}

\usepackage{hyperref}
\usepackage{amssymb}
\usepackage{graphicx}
\graphicspath{{Figures/}}
\usepackage{url}
\usepackage{amsmath}
\usepackage[scriptsize]{subfigure}
\usepackage{float}
\usepackage{algpseudocode}











\bibliographystyle{elsarticle-num}

\begin{document}

\begin{frontmatter}

\title{Artificial Intelligence Based Malware Analysis}

\author[cra]{Avi Pfeffer\corref{corr}}
\ead{apfeffer@cra.com}
\author[cra]{Brian Ruttenberg\corref{corr}}
\ead{bruttenberg@cra.com}
\author[cra]{Lee Kellogg}
\author[cra]{Michael Howard}
\author[cra]{Catherine Call}
\author[cra]{Alison O'Connor}
\author[cra]{Glenn Takata}
\author[cra]{Scott Neal Reilly}
\author[cra]{Terry Patten}
\author[cra]{Jason Taylor}
\author[cra]{Robert Hall}
\address[cra]{Charles River Analytics\\625 Mt. Auburn St.\\Cambridge, MA, 02138}

\author[ull]{Arun Lakhotia}
\author[ull]{Craig Miles}
\address[ull]{Software Research Lab\\University of Louisiana at Lafayette\\Lafayette, LA}

\author[ais]{Dan Scofield}
\author[ais]{Jared Frank}
\address[ais]{Assured Information Security\\Rome, NY}

\cortext[corr]{Corresponding authors}

\begin{abstract}
Artificial intelligence methods have often been applied to perform specific functions or tasks in the cyber--defense realm. However, as adversary methods become more complex and difficult to divine, piecemeal efforts to understand cyber--attacks, and malware--based attacks in particular, are not providing sufficient means for malware analysts to understand the past, present and future characteristics of malware.

In this paper, we present the Malware Analysis and Attributed using Genetic Information (MAAGI) system. The underlying idea behind the MAAGI system is that there are strong similarities between malware behavior and biological organism behavior, and applying biologically inspired methods to corpora of malware can help analysts better understand the ecosystem of malware attacks. Due to the sophistication of the malware and the analysis, the MAAGI system relies heavily on artificial intelligence techniques to provide this capability. It has already yielded promising results over its development life, and will hopefully inspire more integration between the artificial intelligence and cyber--defense communities.
\end{abstract}

\begin{keyword}
cyber--defense, malware, probabilistic models, hierarchical clustering, prediction
\end{keyword}

\end{frontmatter}


\newdefinition{definition}{Definition}
\newtheorem{theorem}{Theorem}
\newtheorem{corollary}{Corollary}
\newproof{proof}{Proof}

\section{Introduction}

Artificial intelligence (AI) ideas and methods have been successfully applied in countless domains to learn complex processes or systems, make informed decisions, or to model cognitive or behavior processes . While there is a tremendous need for artificial intelligence applications that can automate tasks and supplant human analysis, there is also still a place for artificial intelligence technology to enhance human analysis of complex domains and help people understand the underlying phenomenon behind human processes. For instance, artificial intelligence is often applied in the medical field to assist in diagnostic efforts by reducing the space of possible diagnoses for a patient and helping medical professionals understand the complex processes underlying certain diseases~\cite{szolovits1988artificial}.

Cyber security and defense is one field that has tremendously benefited from the explosive growth of artificial intelligence. Spam filters, commercial malware software, and intrusion detection systems are just small example of applications of artificial intelligence to the cyber security realm~\cite{tyugu2011artificial}. However, like in the medical diagnosis field, human judgment and analysis is often needed to reason about complex processes in the cyber security domain. Nevertheless, artificial intelligence tools and concepts can be used help people discover, understand and model these complicated domains. Malware analysis is one task within the cyber defense field that can especially benefit from this type of computer--based assistance.

Loosely defined, malware is just unwanted software that performs some non--benign, often nefarious, operation on any system. Unfortunately, most modern computer users are all too familiar with the concept of malware; McCafee reports almost 200 million unique malware binaries in their archive and increasing at nearly 100,000 binaries per day~\cite{cruz2014mcafee}. As a result of this threat, there has been a concerted push within the cyber defense community to improve the mitigation of malware propagation and reduce the number of successful malware attacks by detailed analysis and study of malware tactics, techniques and procedures.

Principally, malware analysis is focused on several key ideas: Understanding how malware operates; determining the relationship between similar pieces of malware; modeling and predicting how malware changes over time; attributing authorship; and how do new ideas and tactics propagate to new malware. These malware analysis tasks can be very difficult to perform. First, the shear number of new malware generated makes it extremely hard to automatically track patterns of malware over time, let alone allow a human to manually explore different malware binaries. Second, the relationships and similarities between different pieces of malware can be very complex and difficult to find in a large corpus of malware. Finally, malware authors also purposefully obfuscate their methods to prevent analysis, such as by encrypting their binaries (also known as packing). 

Due to these difficulties, malware analysis could tremendously benefit from artificial intelligence techniques, where automated systems could learn models of malware evolution over time, cluster similar types of malware together, and predict the behavior or prevalance of future malware. Cyber defenders and malware analysts could use these AI enabled malware analysis systems to explore the ecosystem of malware, determine new and significant threats, or develop new defenses. 

In this paper, we describe the Malware Analysis and Attribution Using Genetic Information (MAAGI) system. This malware analysis system relies heavily on AI techniques to assist malware analysts in their quest to understand, model and predict the malware ecosystem. The underlying idea behind the MAAGI system is that there are strong similarities between malware behavior and biological organism behavior. As such, the MAAGI system borrows analysis methods and techniques from the genomics, phylogenetics, and evolution fields and applies them to malware ecosystems. Given the large amount of existing (and future) malware, and the complexity of malware behavior, AI methods are a critical tool needed to apply these biological concepts to malware analysis. While the MAAGI system is moving towards a production level deployment, it has already yielded promising results over its development life and has the potential to greatly expand capabilities of malware analysts

\section{Background and Motivation}
\label{background}
The need for AI applications in malware analysis arises from the typical work flow of malware analysts, and the limitations of the methods and tools they currently use to support that work flow. After malware is collected, the analysis of that malware occurs in two main phases. The first is a filtering or triage stage, in which malware are selected for analysis according to some criteria. That criteria includes whether or not they have been seen before, and if not, whether or not they are interesting in some way. If they can be labeled as a member of a previously analyzed malware family, an previous analysis of that family can be utilized and compared against to avoid repetition of effort. If they are novel, and if initial inspection proves them to be worthy of further analysis, they are passed to the second phase where a deep--dive analysis is performed.

The triage phase typically involves some signature or hash--based filtering~\cite{jang_bitshred:_2011}. Unfortunately, this process is only based on malware that has been explicitly analyzed before, and is not good at identifying previously--seen malware that has been somehow modified or obfuscated to form a new variant. In these cases the responsibility generally falls to a human analyst to recognize and label the malware, where it risks being mistaken as truly novel malware and passed on for deep--dive analysis. When strictly relying on human--driven processes for such recognition and classification tasks there are bound to be oversights due to the massive volume and velocity with which new malware variants are generated and collected. In this case, oversights lead to repetition of human effort, which exacerbates the problem, as well as missed opportunities of recognizing and analyzing links between similar malware. Intelligent clustering and classification methods are needed to support the triage process, not to replace the human analysts but to support them, by learning and recognizing families of similar malware among very large collections where signatures and hashes fail, suggesting classifications of incoming malware and providing evidence for those distinctions. A lot of work has been done towards applying clustering techniques to malware, including approaches based on locality--sensitive hashing \cite{bayer2009scalable}, prototype--based hierarchical clustering and classification \cite{rieck2011automatic}, and other incremental hierarchical approaches \cite{perdisci2013scalable,sahoo2006incremental}. The question of whether a truly novel sample is interesting or not is also an important one. With so many novel samples arriving every day it becomes important to prioritize those samples to maximize the utilization of the best human analysts on the greatest potential threats. Automated prioritization techniques based on knowledge learned from past data could provide a baseline queue of incoming malware without any initial human effort required.

The deep--dive phase can also strongly benefit from AI applications. A typical point of interest investigated during this phase is to recognize high--level functional goals of malware; that is, to identify the purpose of the malware, and the intention or motivation of its author. Currently this task is performed entirely by a human expert through direct inspection of the static binary using standard reverse--engineering tools such as a disassembler or a decompiler, and perhaps a debugger or a dynamic execution environment. This has been a very effective means of performing such analysis. However, it relies heavily on the availability of malware reverse--engineering experts. Also, characterizing complex behaviors and contextual motivation of an attacker requires the ability of those experts to recognize sometimes complicated sequences of interrelated behaviors across an entire binary, which can be a very difficult task, especially in large or obfuscated malware samples. 

Besides such single--malware factors, other goals of the deep--dive phase could benefit from intelligent cross--malware sample analysis. For example, the evolution of malware families is of great interest to analysts since it indicates new capabilities being added to malware, new tactics or vulnerabilities being exploited, the changing goals of malware authors, and other such dynamic information that can be used to understand and perhaps even anticipate the actions of the adversary. While it is possible for a human analyst to manually compare the malware samples in a small family and try to deduce simple single--inheritence patterns, for example, it quickly becomes a very difficult task when large families or complex inheritance patterns are introduced~\cite{dumitras2011experimental}. Research has been done towards automatically learning lineages of malware families \cite{karim2005malware}, and even smarter methods based on AI techniques could provide major improvements and time savings over human analyst--driven methods. Besides the technical evolution of malware, other cross--family analyses could benefit from automated intelligent techniques. For example, the life cycle patterns of malware, which can be useful to understand and anticipate growth rates and life spans of malware families, can be difficult for an analyst to accurately identify since the data sets are so large.

The goals of the malware analyst, and the various types of information about binaries gathered during the triage and deep--dive phases, are broad. The level of threat posed by a piece of malware, the nature of that threat, the motivation of the attacker, the novelty of the malware, how it fits into the evolution of a family and other factors can all be of interest. Also, these factors may be of particular interest when combined with one another. For example, combining the output of clustering, lineage and functional analyses could reveal the method by which the authors of a particular malware family made the malware's credential stealing module more stealthy in recent versions, suggesting that other families may adopt similar techniques. For this reason, an automated malware analysis tool should take all of these types of information into account, and should have solutions for detecting all of them and presenting them to the analyst in a cohesive manner.

\section{The Cyber Genome Program}

The MAAGI system was developed under the US Defense Advanced Research Projects Agencies' (DARPA) Cyber Genome program. One of the core research goals of this program is the development of new analysis techniques to automate the discovery, identification, and characterization of malware variants and thereby accelerate the development of effective responses~\cite{cybergenome}. 

The MAAGI system provides the capabilities described in Sec.~\ref{background} based on two key insights, each of which produces a useful analogy. The first is that malware is very rarely created \textit{de novo} and in isolation from the rest of the malware community. Attackers often reuse code and techniques from one malware product to the next. They want to avoid having to write malware from scratch each time, but they also want to avoid detection, so they try to hide similarities between new and existing malware. If we can understand reuse patterns and recognize reused elements, we will be able to connect novel malware to other malware from which it originated, create a database of malware with relationships based on similarity that analysts can interactively explore, and predict and prepare to defend against future attacks.

This insight suggests a biological analogy, in which a malware sample is compared to a living organism. Just like a living organism, the malware sample has a phenotype, consisting of its observed properties and behavior (e.g. eye color for an organism, type of key--logger used by malware). The phenotype is not itself inherited between organisms; rather it is the expression of the genotype which is inherited. Likewise, it is the source code of the malware that is inherited from one sample to the next. Similar to biological evolution, malware genes that are more "fit" (i.e., more successful) are more likely to be copied and propagated by other malware authors. By reverse engineering the malware to discover its intrinsic properties, we can get closer to the genetic features that are inherited.

The second insight is that the function of malware has a crucial role to play in understanding reuse patterns. The function of malware – what it is trying to accomplish – is harder to change than the details of how it is accomplished. Therefore, analysis of the function of malware is a central component of our program. In our analysis of function, we use a linguistic analogy: a malware sample is like a linguistic utterance. Just like an utterance, malware has temporal structure, function, and context in which it takes place. The field of functional linguistics studies utterances not only for their meaning, but also for the function they are trying to accomplish and the context in which they occur.

The MAAGI system provides a \textit{complete} malware analysis framework based on these biological and linguistic analogies. While some of the ideas in biology and linguistics have been applied to malware analysis in the past, the MAAGI system integrates all of these novel analysis techniques into a single workflow, where the output of one analysis can be fed to another type of analysis. While biology and linguistics provide the foundational theory for these malware analysis schemes, they could not be implemented or integrated without using sophisticated artificial intelligence methods.

\section{Overview of MAAGI system}

\begin{figure}[t]
\centering 
\includegraphics[width=.6\columnwidth]{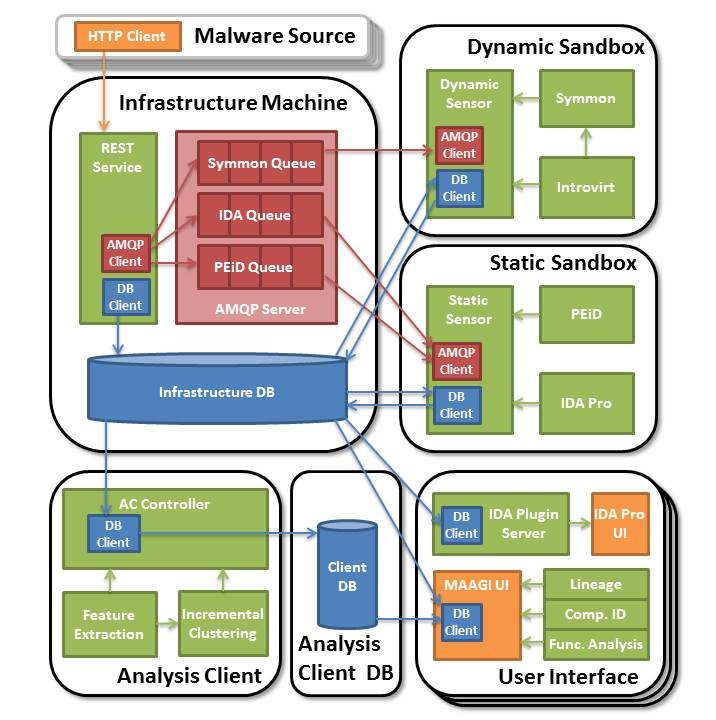}
\vspace{-1mm}
\caption{High level architecture of the MAAGI system.\label{arch}}
\end{figure}

The Malware Analysis and Attribution using Genetic Information (MAAGI) system is a complete, automated malware analysis system that applies AI techniques to support all phases of traditional malware analysis work flows, including reverse--engineering, high--level functional analysis and attacker motivational characterization, clustering and family classification of malware, evolutionary lineage analysis of malware families, shared component identification, and predicting the future of malware families. An overview of the system is shown in Fig.~\ref{arch}.

Malware is first uploaded to the system, then submitted for static reverse--engineering and dynamic execution and tracing, via HTTP requests. The static analysis is built on top of the Hex-Rays IDA disassembler. In addition to other features of the binary including strings and API imports, it generates a semantic representation of the function of code blocks, called BinJuice \cite{lakhotia2013fast}. This code abstraction has been shown to be more robust to obfuscation, compiler changes, and other cross--version variation within malware families, which improves performance on clustering, lineage, and other types of analysis that rely on measuring the similarity between binaries.

As malware are reversed, they are fed into our clustering system where they are incrementally organized into a hierarchy of malware families. The system first classifies the new malware into the hierarchy at the leaves, then re--shapes sections of the hierarchical structure as needed to maintain an optimal organization of the malware into families. Besides providing the basis for an efficient method of incremental clustering, maintaining a database of malware as a hierarchy offers other benefits, such as the ability to perform analysis tasks on nodes at any level of the hierarchy, which are essentially representatives of malware families at various levels of granularity.

The hierarchical clustering of malware can then be viewed and explored remotely through the MAAGI user interface, our analysis and visualization tool. To understand malware of interest in the context of their family, users can view and compare the features of various malware and their procedures, including strings, API calls, code blocks, and BinJuice. Search functions are also available to allow users to find related samples according to their features, including searches for specific features, text--based searches, and searches for the most similar procedures in the database to a particular procedure, which uses MinHash \cite{broder1997resemblance} as an approximation of the similarity between sets of BinJuice blocks.


From the user interface, users can initiate our various other analyses over subsets of the database. For example, lineage analysis can be run over the set of malware currently selected within the tool. Lineage analysis uses a probabilistic model of malware development, built using the Figaro probabilistic programming language \cite{pfeffer2011practical}, to learn an inheritance graph over a set of malware. This directed graph represents the pattern of code adoption and revision across a family of malware, and can include complex structures with branches, merges, and multiple roots. The returned lineage is visualized in the user interface where the malware can once again be explored and compared to one another. Learning a lineage over a family can assist malware analysts in understand the nature of the relationships between similar samples, and provide clues concerning the strategy and motivation behind changes in code and functionality adopted during development of the malware.

To further understand how code sharing occurs between malware samples, users can also run the component identification process, which identifies cohesive, functional code components shared across malware binaries. Component identification uses a multi--step clustering process. In the first step it clusters together similar procedures across the set of binaries using Louvain clustering, which is a greedy agglomerative clustering method \cite{blondel2008fast}. In the second step it proceeds to cluster these resulting groups of procedures together based on the samples in which they appear. This results in the specification of a number of components, which are collections of procedures within and across malware binaries that together comprise a functional unit.

MAAGI also includes a functional analysis capability that can determine the functional goals of malware, or the motivational context behind the attack. When executed over any sample in the database, it analyzes the static features of the malware using systemic functional grammars (SFGs). Systemic functional linguistics is an approach to the study of natural language that focuses on the idea of speech as a means to serve a specific function. The application of SFGs in this case leverages the analogy of malware as an application of the language of Win32 executable code for the purpose of exhibiting particular malicious behaviors. The grammars themselves are created by subject matter experts and contain structured descriptions of the goals of malicious software and the context in which such attacks occur. The result of MAAGI's functional analysis is a human--readable report describing the successful parses through the grammar, ranked according to the level of confidence, along with the evidence found within the binary to justify those functional designations.

The final element of the MAAGI system is a trend analysis and prediction capability. This tool predicts future attributes of malware families, given only the first malware sample discovered from that family. These attributes include such factors as the number of distinct versions within the family, the amount of variability within the family, and the length of time until the final variant will be seen. The regression model for these predictions are learned from data using machine learning techniques including support vector machines (SVM) \cite{gunn1998support}. 

 
\section{MAAGI Persistent Malware Hierarchy}
\label{clustering}

While the focus of most malware analysis is to learn as much as possible about a particular malware binary, the analysis is both facilitated and improved by first clustering incoming malware binaries into families along with all previously seen malware. First, it assists in the triage process by immediately associating new malware binaries with any similar, previously--seen malware. This allows those binaries to be either assigned to analysts familiar with that family or lowered in priority if the family to which they are assigned is deemed uninteresting, or if the binary is only superficially different from previously--analyzed malware. Second, providing clustering results to an analyst improves the potential quality of the information gathered by the deep--dive analysis process. Rather than looking at the malware in isolation, it can be analyzed within the context of its family, allowing analysts to re--use previous analyses while revealing potentially valuable evolutionary characteristics of the malware, such as newly--added functionality or novel anti--analysis techniques.

To support clustering within a real--world malware collection and analysis environment, however, it is important that the operation be performed in an online fashion. As new malware are collected and added to the database, they need to be incorporated into the existing set of malware while maintaining the optimal family assignments across the entire database. The reason for this is that clustering of high--dimensional data such as malware binaries is by nature a highly time--complex operation. It is simply not reasonable to cluster the entire database of malware each time a new binary is collected. To support the online clustering of malware we propose a novel incremental method of maintaining a hierarchy of malware families. This hierarchical representation of the malware database supports rapid, scalable online clustering; provides a basis for efficient classification of new malware into the database; describes the complex evolutionary relationships between malware families; and provides a means of performing analysis over entire families of malware at different granularities by representing non--leaf nodes in the hierarchy as agglomerations of the features of their descendants.


In the following sections, we will outline the hierarchical data structure itself, the online algorithm, the specific choices made for our implementation of the algorithm, the results of some of the experiments performed over actual malware data sets, and conclusions we have drawn based on our work. For a more thorough description of our approach, see our paper titled Hierarchical Management of Large–Scale Malware Data \cite{?????}, which has been submitted for publication, and from which much of this material originated.

\subsection{Hierarchy}
\label{hier}

At the center of our approach is a hierarchical data structure. Malware are inserted into the hierarchy as they are introduced to the system. The various steps of our incremental algorithm operate over this data structure, and it has a number of properties and restrictions to support these operations.

The hierarchy is a tree structure with nodes and edges, and no restriction on the number of children a node may have. There is a single root at the top and the malware themselves are at the leaves. All of the nodes that are not leaves are called exemplars, and they act as representatives of all of the malware below them at the leaves. Just as the binaries are represented in the system by some features, exemplars are represented in the same feature space based on the binaries below them. Sec.~\ref{impl} describes some methods of building exemplars. This common representation of exemplars allows them to be compared to malware binaries, or to one another, using common algorithms and distance functions. The exemplars that have binaries as their children are called leaf exemplars and simultaneously define and represent malware families. Leaf exemplars must only have binaries as children and not any other exemplars. There is no restriction that the hierarchy must be balanced and if the hierarchy is lopsided binaries in different families may have very different path distances from the root.

This hierarchy design is based on the concept of a dendogram, which is the output of many hierarchical clustering algorithms. In a dendogram, like our hierarchy, each node represents some subset of the dataset, with the most general nodes at the top and the most specific nodes at the bottom. Each level represents some clustering of the leaves. As you traverse up the tree from the leaf exemplars, the exemplars progressively represent more and more families, and the cohesion between the binaries within those families becomes weaker. Finally, the root represents the entire corpus of malware.

This structure provides some very important insight into the data. Specifically, distance in the hierarchy reflects dissimilarity of the malware. For example, any given exemplar will typically be more similar to an exemplar that shares its parent than an exemplar that shares its grandparent. This relationship holds for binaries as well. We use this insight to our advantage to represent malware families as more binaries are introduced.

\subsection{Online Operation}
\label{online}

The hierarchical organization of malware is an effective approach to managing the volume of data, but we also need to consider the velocity at which new binaries are discovered and added to the system. As new malware binaries are rapidly encountered, we need a means to incorporate them into the hierarchy that is fast, flexible and accurately places new binaries into new or existing families.

One naive approach to managing the influx of new binaries is to re--cluster the entire database upon arrival of some minimum amount of new binaries to create a new hierarchy. This has the obvious disadvantage that large amounts of the hierarchy may not change and unnecessary work is being performed at each re--clustering. Another naive approach is to add new malware binaries to existing families in the hierarchy using some classification algorithm. This too is problematic, as the families would be fixed. For example, two malware binaries that are originally assigned to different families would never end up in the same family, even if the influx of new binaries later suggested that their differences were relatively insignificant and that they should in fact be siblings.

\begin{figure*}[t]
\centering
\subfigure[Insertion: A new malware binary ($S_{new}$) is inserted into the hierarchy below $E_5$, the closest leaf exemplar according to the distance function $\delta$. The exemplar has features $F_5$, and it is marked as modified. When all new binaries have been classified, a post--order traversal of the hierarchy is initiated.]{
\includegraphics[width=.45\textwidth]{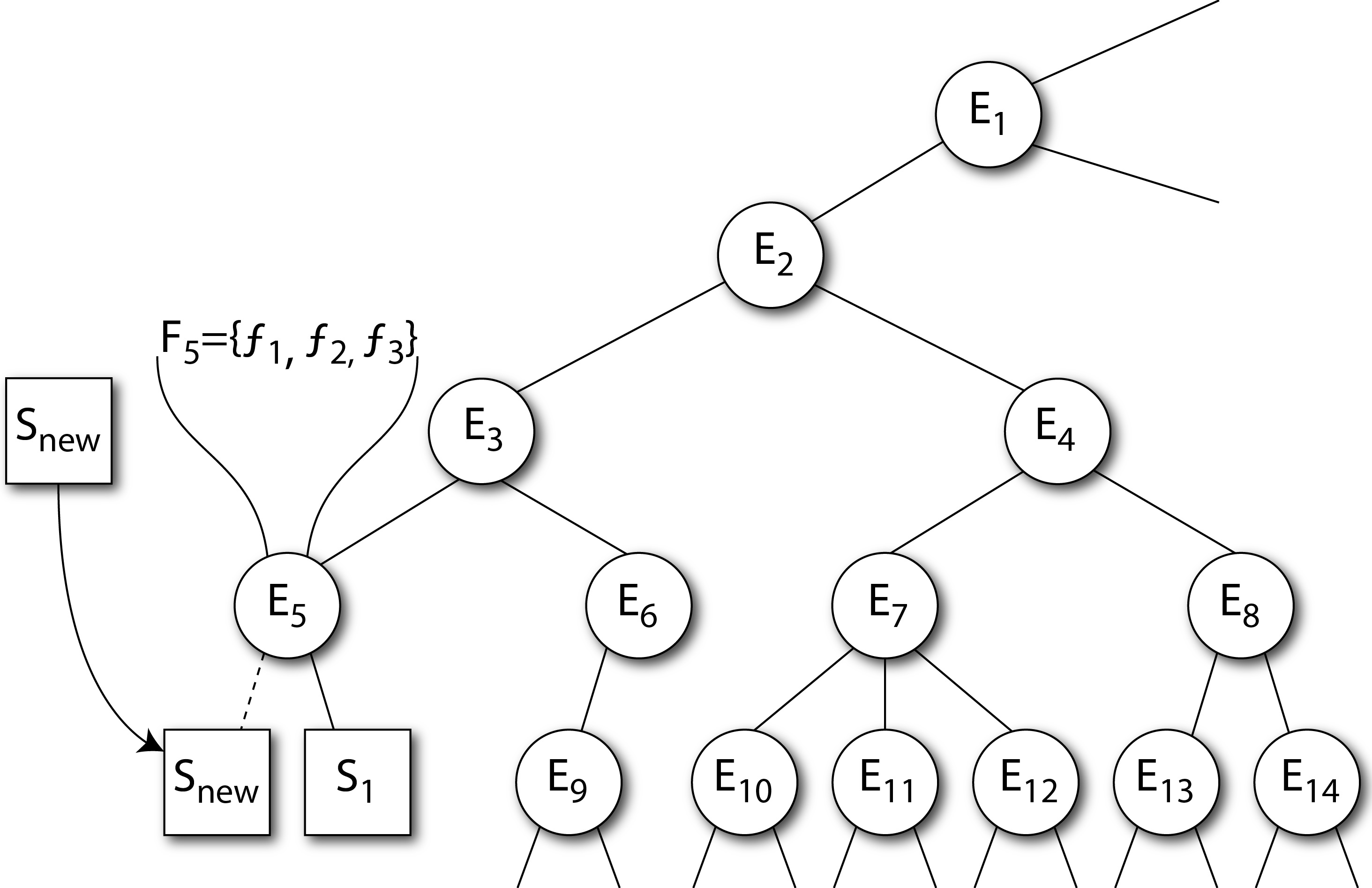}
\label{fig:online1}
}\hspace{0.40em}
\subfigure[Comparison and Marking: When the modified parent $E_5$ is reached during the post--order traversal, its new features $F_5'$ are calculated using the $\theta$ function over its children. If the distance function $\delta$ determines that the node has changed enough since it was originally created, its $(d_r - 1)$\textsuperscript{th} ancestor $E_2$ is marked for re-clustering.]{
\includegraphics[width=.45\textwidth]{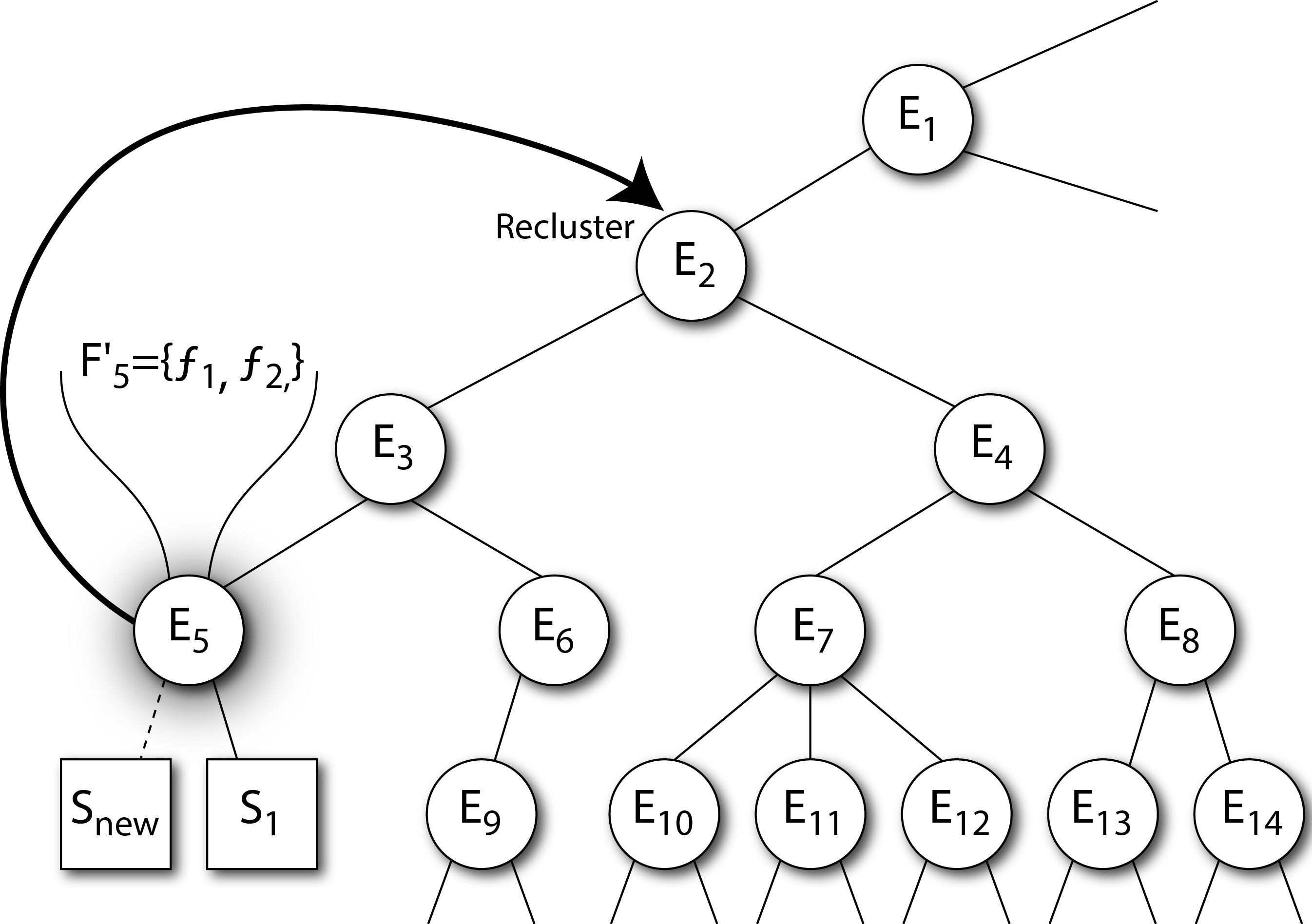}
\label{fig:online2}
}\hspace{0.40em}
\subfigure[Re-clustering: When the post--order traversal reaches $E_2$, which is marked for re-clustering, its $d_r$\textsuperscript{th} descendants are passed to the clustering algorithm, in this case its great--grandchildren. The old exemplars in between are removed.]{
\includegraphics[width=.45\textwidth]{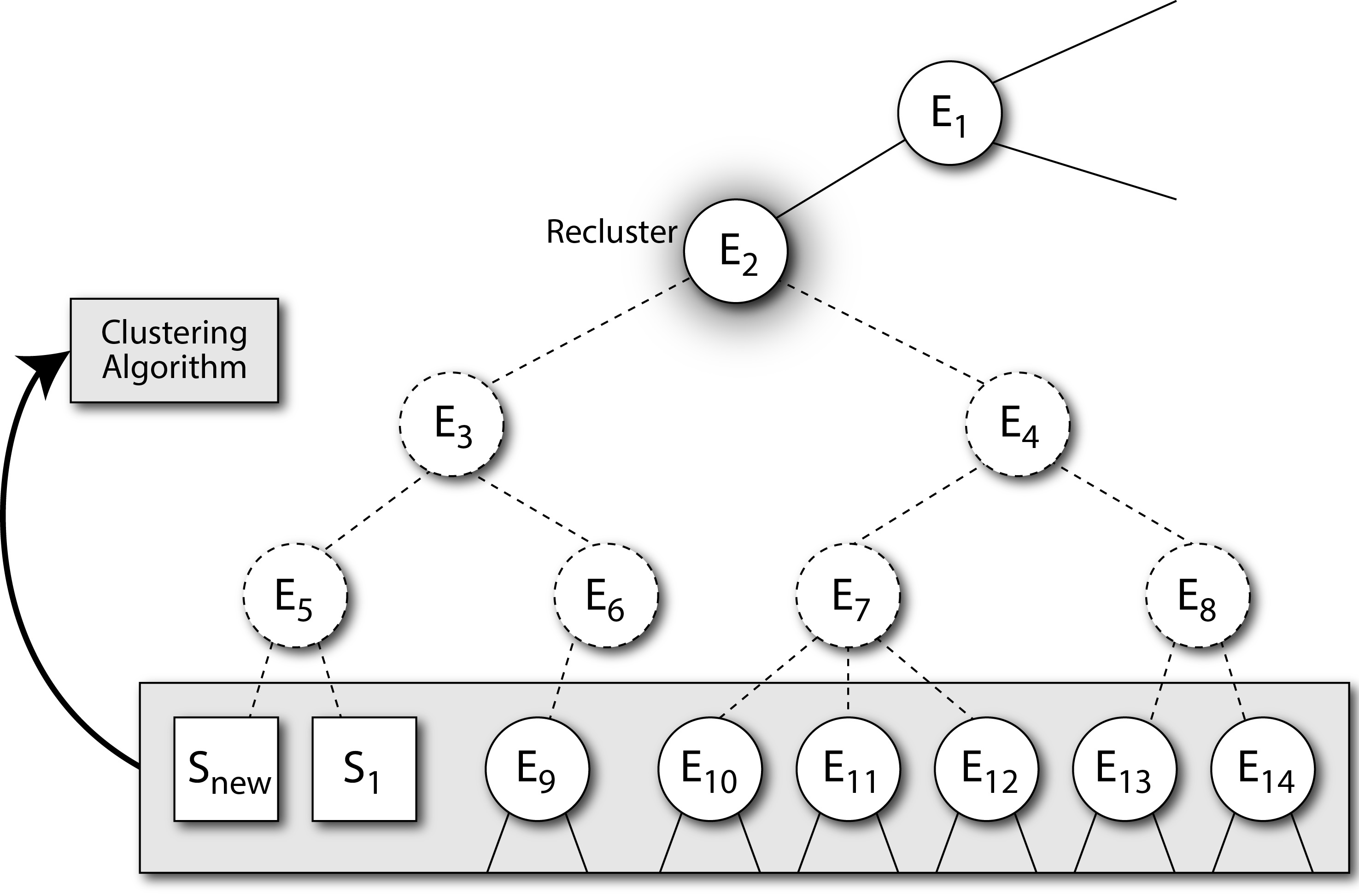}
\label{fig:online3}
}\hspace{0.40em}
\subfigure[Sub-hierarchy Replacement: The clustering algorithm returns a sub-hierarchy, which is inserted back into the full hierarchy below $E_2$, replacing the old structure. There is no restriction on the depth of the returned sub-hierarchy. The features of $E_2$ are unchanged.]{
\includegraphics[width=.45\textwidth]{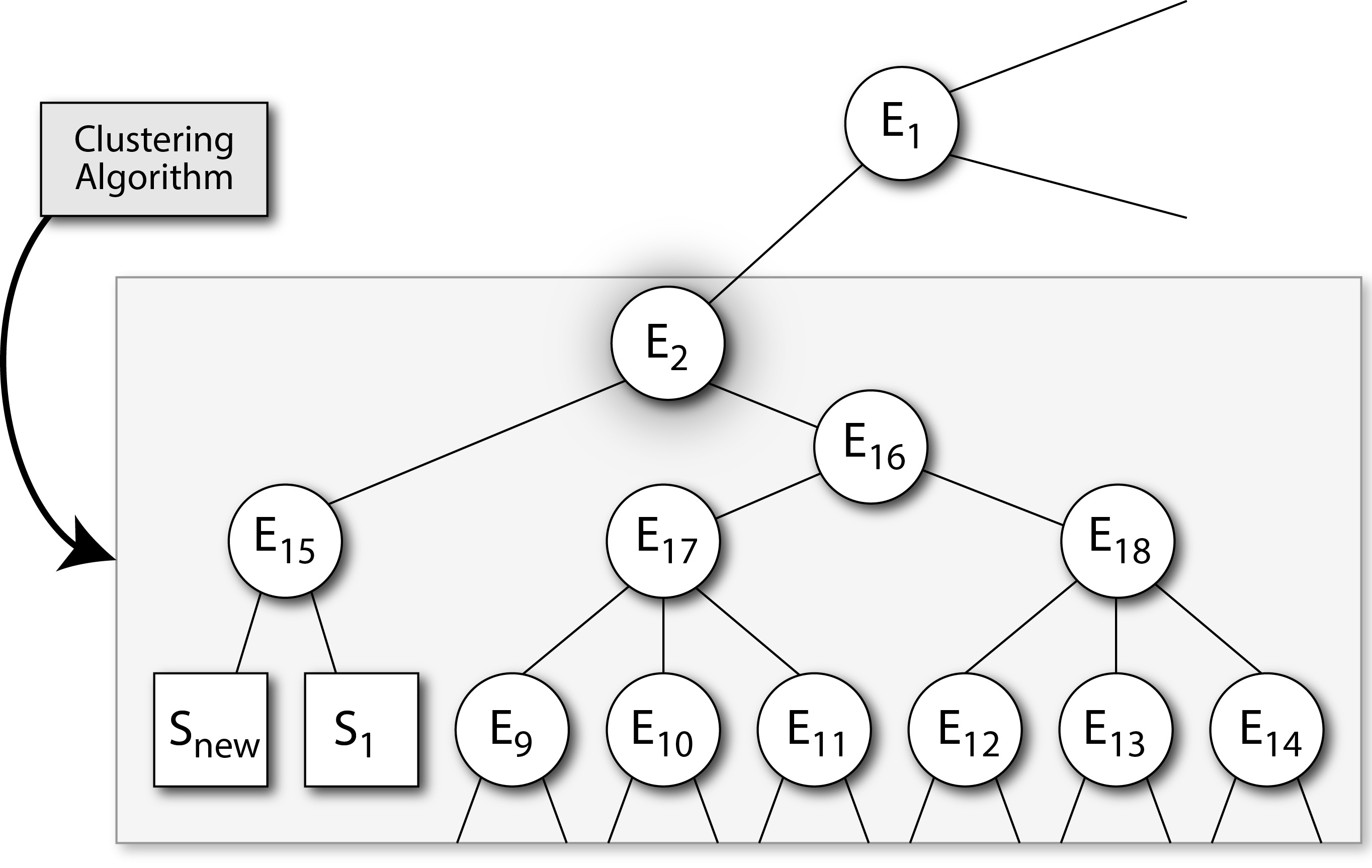}
\label{fig:online4}
}
\vspace{0mm}
\caption[]{A diagram of the online hierarchical clustering algorithm operating over a portion of the hierarchy. In this case the re-clustering depth $d_r$ is set to 3.\label{fig:online}}
\end{figure*}


To alleviate these problems, we propose an incremental classification and clustering algorithm that is run online as new binaries are ingested by the system. The main idea behind our incremental update is that we only re--cluster portions of the hierarchy that have significantly changed since the last update, and we have designed the re--clustering process so that no malware binary is ever fixed in a family. A summary of the algorithm is illustrated in Fig.~\ref{fig:online}. 

\subsubsection{ Algorithm Basics}

Before we go into more detail on the incremental clustering algorithm operation, we first define some basic terms and concepts. Let $\mathcal{H}$ be some hierarchy of malware binaries, where each node $i$ in the hierarchy is represented by an exemplar $E_i$ and let $\mathbb{C}^n_i = \{E_1, \dots, E_k\}$ denote the descendants of exemplar $E_i$ at depth $n$. Furthermore, let $F_i$ denote the features of $E_i$ in some feature space $\mathbb{F}$. $F^0_{i}$ are the original features of $E_i$ at its time of creation. We further define a function $\theta:\mathbb{F}\times\dots\times\mathbb{F}\rightarrow\mathbb{F}$ that aggregates features from a set of exemplars into a single feature representation. We now define what it means for a function $\theta$ to be \textit{transitive} on the hierarchy.

\begin{definition}
\label{def:comm}
\textbf{Transitive Feature Function.} A function $\theta$ is transitive on hierarchy $\mathcal{H}$ if for every exemplar $E_i$,
\begin{equation*}
\begin{split}
 & \theta(F_{1,n}, \dots, F_{k,n}) = \theta(F_{1,m}, \dots, F_{j,m})\mbox{ where } \\
 & E_{1,n}, \dots, E_{k,n} \in \mathbb{C}^n_i, \,\, E_{1,m}, \dots, E_{j,m} \in \mathbb{C}^m_i,
\end{split}
\end{equation*}
\end{definition}
That is, a transitive function on the hierarchy is one where application of $\theta$ to the features of an exemplar's children at depth $n$ results in a new feature that is the same as the application of $\theta$ to children at depth $m$. Two examples of transitive functions are the intersection function and the weighted average function, described in Sec.~\ref{impl}.

In addition to $\theta$, there are three additional parameters that are used in the clustering algorithm. The first, $d_r$, limits the portion of the hierarchy that is re--clustered when an exemplar has been marked for re--clustering. $\delta$ is a distance function defined between the features $F_i$ and $F_j$ of two exemplars (or binaries). Finally, $\tau$ specifies a distance threshold that triggers a re--clustering event at some exemplar.

\subsubsection{Algorithm Detail}

When a sufficient amount of new malware binaries have been collected, the algorithm is initiated to integrate them into the hierarchy. Note that the timing of this process is subjective and different uses of the hierarchy may require more or less initiations of the algorithm. The first thing the algorithm does is to insert each new malware binary into an existing family in the hierarchy (Fig. \ref{fig:online1}). As the existing families in the hierarchy are defined as the leaf exemplars, the distance from each new malware binary to each leaf exemplar is computed using the distance function $\delta$, and a binary is added as a child to the minimum distance exemplar. When a binary has been added to a leaf exemplar, the exemplar is marked as modified.

After the classification step, we perform a post--order traversal of the hierarchy. A post--order traversal is a depth--first pass in which each node is only visited after all of its children has been visited. At each node, we execute a function, \textit{Update}, that updates the exemplars with new features and controls when to re--cluster portions of the hierarchy. The first thing the function does is to check if the exemplar has been modified, meaning it has had new malware inserted below it. If it has been modified, it marks its parent as modified as well. It then creates the new features of the exemplar, $F_i'$, by combining the child exemplars of the node using the $\theta$ function. If the distance between $F_i'$ and the original features of the exemplar, $F^0_{i}$, is greater than $\tau$, when we mark the $(d_r - 1)$\textsuperscript{th} ancestor as needing re--clustering (Fig. \ref{fig:online2}).

Finally, we check if the exemplar has been marked for re--clustering. If so, we perform a clustering operation using $\mathbb{C}^{d_r}_i$, the children at a depth of $d_r$ from $E_i$ (Fig. \ref{fig:online3}). The clustering operation returns a sub--hierarchy $\mathcal{H}'$, which has $E_i$ as its root and $\mathbb{C}^{d_r}_i$ as its leaves. The old exemplars between $E_i$ and $\mathbb{C}^{d_r}_i$ are discarded, and $\mathcal{H}'$ is inserted back into $\mathcal{H}$ between $E_i$ and $\mathbb{C}^{d_r}_i$ (Fig. \ref{fig:online4}). There are no assumptions made about the operation and algorithm employed in the re--clustering step. We only enforce that the clustering operation returns a sub--hierarchy with $E_i$ as its root and $\mathbb{C}^{d_r}_i$ as its leaves; the internal makeup of this sub--hierarchy is completely dependent upon the clustering algorithm employed. This means that the clustering algorithm is responsible for creating new exemplars between the leaves and the root of the sub--hierarchy (and hence we must pass $\theta$ to the algorithm). We also do not enforce that the depth of the new sub--hierarchy equals $d_r$. That is, after re--clustering, the distance from $E_i$ to any one of $\mathbb{C}^{d_r}_i$ may no longer be $d_r$ -- it could be more or less depending upon the clustering algorithm and the data.

There are several key properties of the algorithm that enable it manage the large amounts of malware effectively. First, the $d_r$ parameter controls how local changes in portions of the hierarchy are propagated to the rest of the hierarchy. For instance, if $d_r = \infty$, then any change in the hierarchy will trigger a re--clustering event on \textit{all} malware binaries in the hierarchy; such an operation can be extremely time consuming with a large amount of binaries. Using a reasonable value of $d_r$, however, lets us slowly propagate changes throughout the hierarchy to regions where the impact of the changes will be felt the most. Clearly, higher values of $d_r$ will allow modified exemplars to potentially cluster with a larger set of exemplars, but at the cost of higher computational effort.

Another key property is the ability to create new structure in the hierarchy upon re--clustering a set of exemplars. If we enforced that upon re--clustering, the distance between $E_i$ and $\mathbb{C}^{d_r}_i$ remained fixed at $d_r$, then we would never add any new families to the hierarchy. This property is critical, as certain parts of the hierarchy may contain more binaries which should be reflected in a more diverse hierarchical organization.

The detailed pseudocode for the algorithm is in Fig. \ref{fig:alg}. In the pseudocode, the $applyPostOrder$ function applies the \textit{Update} function to the exemplars in $\mathcal{H}$ in post--order fashion, passing along the arguments $d_r$, $\theta$, $\delta$, and $\tau$ (line \ref{postOrder}). The $ancestor(E_i,n)$ function retrieves the $n$\textsuperscript{th} ancestor of $E_i$ (lines \ref{modified}, \ref{mark}). The $cluster(\mathbb{C}^{d_r}_i, \delta, \theta)$ function uses a standard hierarchical clustering algorithm to cluster the exemplars or binaries at a depth of $d_r$ below the exemplar $E_i$. Finally, the $insert(\mathcal{H}, \mathcal{H}', E_i, d_r)$ function inserts a sub-hierarchy $\mathcal{H}'$ into the full hierarchy $\mathcal{H}$, replacing the nodes between $E_i$ and $\mathbb{C}^{d_r}_i$ (line \ref{insert}).

\begin{figure}
\centering
\begin{algorithmic}[1]
\State $\mathbb{M} \gets$ new malware

\Procedure{IncrementalCluster}{$\mathbb{M}, \mathcal{H}, d_r, \theta, \delta, \tau$}\label{incrProc}
\State $\mathbb{L} \gets \mbox{leaf exemplars} \in \mathcal{H}$
\For{$m \in \mathbb{M}$}\label{classStart}
  \State $E_{min} \gets \underset{E_i \in \mathbb{L}}{argmin}\,\, \delta(m, E_i)$\label{minLeaf}
  \State $addChild(E_{min}, m)$ \label{addChild}
  \State mark $E_{min}$ as modified
\EndFor\label{classEnd}
\State $applyPostOrder(\mathcal{H}, d_r, \theta, \delta, \tau, \mbox{Update})$\label{postOrder}
\EndProcedure

\State
\Procedure{Update}{$E_i, d_r, \theta, \delta, \tau$}\label{update}
\If {$isModified(E_i)$}\label{isModified}
  \State mark $ancestor(E_i, 1)$ as modified\label{modified}
  \State $F_i' \gets \theta(\mathbb{C}^1_i)$\label{features}  
  \If {$\delta(F_i', F^0_{i}) > \tau$}\label{distCheck} 
    \State mark $ancestor(E_i, d_r - 1)$ for re--clustering\label{mark}
  \EndIf
\EndIf
\If {$needsReclustering(E_i)$} \label{needsReclustering}
  \State $\mathcal{H}' \gets cluster(\mathbb{C}^{d_r}_i, \delta, \theta)$\label{recluster}
  \State $\mathcal{H} \gets insert(\mathcal{H}, \mathcal{H}', E_i, d_r)$\label{insert}
\EndIf
\EndProcedure
\end{algorithmic}
\caption{The incremental classification and clustering algorithm.\label{fig:alg}}
\end{figure}

\subsubsection{Clustering Correctness}

The hierarchy is a representation of a database of malware, and changes to the hierarchy should only reflect changes in the underlying database. That is, an exemplar should be marked for re--clustering because of some change in the composition of the malware database, and \textit{not} due to the implementation of the clustering algorithm. This means that when the clustering algorithm returns $\mathcal{H}'$, the features of the leaves of $\mathcal{H}'$ and the root (i.e., $E_i$ that initiated the re--clustering) must remain unchanged from before the clustering. For instance, if the clustering algorithm returns $\mathcal{H}'$ with the root features changed, then we may need to mark one of $E_i$'s ancestors for re--clustering, but this re--clustering was triggered by how $\mathbb{C}^{d_r}_i$ was clustered, and not because of any significant change in the composition 
of the malware below $E_i$. This idea leads us to Thm.~\ref{thm:1} below.

\begin{theorem}
\label{thm:1}
 Let $F_i$ be the features of exemplar $E_i$ whose children at depth $d_r$ need to be re--clustered (i.e., the root of $\mathcal{H}'$) and let $F_i'$ be the features of $E_i$ after re--clustering. If the re--clustering algorithm uses $\theta$ to generate the features of exemplars, and $\theta$ is a transitive feature function, then $F_i = F_i'$.
\end{theorem}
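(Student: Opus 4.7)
The plan is to show both $F_i$ and $F_i'$ can be written as $\theta$ applied to the same set $\mathbb{C}^{d_r}_i$, using the transitive property of $\theta$ twice — once for the original hierarchy rooted at $E_i$ and once for the freshly returned sub-hierarchy $\mathcal{H}'$.

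First I would unpack what it means for the algorithm to ``use $\theta$ to generate the features of exemplars.'' By the update rule in Fig.~\ref{fig:alg} (line \ref{features}), every internal exemplar's features are set to $\theta$ applied to its immediate children. Iterating this up from the leaves, and invoking Definition~\ref{def:comm}, the features of any exemplar must equal $\theta$ applied to its descendants at \emph{any} fixed depth. In particular, for the exemplar $E_i$ in the original hierarchy, I would write
\begin{equation*}
F_i \;=\; \theta(\mathbb{C}^{1}_{i}) \;=\; \theta(\mathbb{C}^{d_r}_{i}),
\end{equation*}
where the second equality is exactly the transitivity hypothesis.

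Next I would apply the same reasoning to the sub-hierarchy $\mathcal{H}'$ returned by the re-clustering step. Its leaves are, by the contract on $cluster(\mathbb{C}^{d_r}_i,\delta,\theta)$ stated in Sec.~\ref{online}, precisely the set $\mathbb{C}^{d_r}_i$; its root is $E_i$; and its internal exemplars are freshly produced by the same rule of applying $\theta$ to immediate children. Hence $\theta$ is transitive on $\mathcal{H}'$ as well, and the new features of the root satisfy
\begin{equation*}
F_i' \;=\; \theta\bigl(\text{immediate children of }E_i\text{ in }\mathcal{H}'\bigr) \;=\; \theta(\mathbb{C}^{d_r}_{i}).
\end{equation*}
Combining the two displayed equalities gives $F_i = F_i'$, which is the desired conclusion.

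The argument is essentially bookkeeping once transitivity is in hand, so I do not anticipate a serious obstacle. The one subtlety worth stating carefully is the \emph{scope} of the transitivity hypothesis: Definition~\ref{def:comm} is stated for a single hierarchy $\mathcal{H}$, and here we need to invoke it on both the old hierarchy and on $\mathcal{H}'$. I would therefore be explicit that because both hierarchies have features assigned by the same bottom-up rule ($\theta$ on immediate children), the transitivity property propagates to $\mathcal{H}'$ as soon as it holds on the leaves $\mathbb{C}^{d_r}_i$, which they inherit unchanged from $\mathcal{H}$. With that observation, the proof is a two-line chain of equalities.
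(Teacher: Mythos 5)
Your proof is correct and follows essentially the same route as the paper's: both arguments anchor on the fact that the leaf set $\mathbb{C}^{d_r}_i$ is unchanged by re-clustering and then apply transitivity of $\theta$ once in the old hierarchy and once in the returned sub-hierarchy to equate $F_i$ and $F_i'$ with $\theta(\mathbb{C}^{d_r}_i)$. Your explicit remark about the scope of the transitivity hypothesis extending to $\mathcal{H}'$ is a small clarification the paper leaves implicit, but it does not change the argument.
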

\begin{proof}
 The proof follows naturally from the definition of a transitive feature function. Let $F^{d_r}_i = \theta(\mathbb{C}^{d_r}_i)$  be the features computed for $E_i$ using the children of $E_i$ at depth $d_r$. After re--clustering, let ${F^{d_r-1}_i}'$ be the features computed for $E_i$ using children of $E_i$ at depth $d_r-1$. By the definition of a transitive feature function,
\begin{equation*}
\begin{split}
  & F^{d_r}_i = F^{d_r-1}_i = \dots = F^{1}_i\\
  & F^{d_r}_i = {F^{d_r-1}_i}' = \dots = {F^{1}_i}'
\end{split}
\end{equation*}
As $F^{1}_i = {F^{1}_i}'$, then $F_i$ remains unchanged before and after clustering. Hence, if $\theta$ is a transitive function, then it is guaranteed that the insertion of the new sub--hierarchy into $\mathcal{H}$ will itself not initiate any new re--clustering.
\end{proof}

\subsubsection{Clustering Time Complexity}

It would also be beneficial to understand how the theoretical time complexity of our incremental clustering algorithm compares with the time complexity of the true clustering of the database. Without loss of generality, we assume the hierarchy $\mathcal{H}$ has a branching factor $b$ and let $\mathcal{H}_d$ indicate the total depth of the hierarchy. We also assume that we have some clustering algorithm that runs in complexity $O(n^p)$, for some number of exemplars $n$ and exponent $p$. For all of this analysis, we take a conservative approach and assume that each time the incremental clustering algorithm is invoked with a new batch of malware, that the entire hierarchy is marked for re--clustering.

We first consider the basic case and compare creating a hierarchy from a single, large clustering operation to creating one incrementally. This is not a realistic scenario, as the incremental algorithm is intended to operate on batches of malware over time, but it serves to base our discussion of time complexity. With a re--clustering depth of $d_r$, on the entire hierarchy, we would perform $b^{\mathcal{H}_d-d_r+1}$ clustering operations. Each operation in turn would take $O(b^{p \cdot d_r})$ time. In contrast, clustering the entire database at once would take $O(b^{p \cdot \mathcal{H}_d})$ time. So our incremental clustering algorithm is faster if:`
\begin{equation}
\begin{split}
 b^{(\mathcal{H}_d-d_r+1)} b^{p d_r} & < b^{p \mathcal{H}_d}\\
 (\mathcal{H}_d-d_r+1)+(p d_r) & < p\mathcal{H}_d\\
 d_r(p-1)+ 1 & < \mathcal{H}_d(p-1)
\end{split}
\end{equation}
So, for any value of $p > 1.5$ our method is faster if $d_r$ is at most $\mathcal{H}_d-2$, which is a very reasonable assumption. In practice we typically use $d_r$ values of 2 or 4, and $\mathcal{H}_d$ might be on the order of 100 edges deep when clustering 10,000 binaries collected from the wild, for example.

Of course, we don't perform a single clustering operation but perform incremental clustering online when a sufficient number of new malware binaries have been collected. For purposes of comparison, however, let assume that there are a finite amount of binaries that comprise $\mathcal{H}$. Let us further assume that the incremental clustering is broken up into $S$ batches, so that each time the algorithm is run, it is incorporating $\frac{b^{\mathcal{H}_d}}{S}$ new malware binaries into the database. That is, the first time incremental clustering is run, we perform $\frac{b^{\mathcal{H}_d-d_r+1}}{S}$ clustering operations, the second time  $\frac{2b^{\mathcal{H}_d-d_r+1}}{S}$ clustering operations and so forth. In this case, our method is faster if:
\begin{equation}
\label{eqn:2}
\begin{split}
 \frac{1}{S} b^{(\mathcal{H}_d-d_r+1)} b^{p  d_r} + \frac{2}{S} b^{(\mathcal{H}_d-d_r+1)} b^{p  d_r} + \dots & < b^{p  \mathcal{H}_d}\\
 b^{(\mathcal{H}_d-d_r+1)} b^{p  d_r} (\frac{1}{S} + \frac{2}{S} + \dots) & < b^{p  \mathcal{H}_d}\\
 \frac{1}{S} b^{(\mathcal{H}_d-d_r+1)} b^{p  d_r} \sum_{i=1}^S i & < b^{p  \mathcal{H}_d}\\
 \frac{1}{S} b^{(\mathcal{H}_d-d_r+1)} b^{p  d_r} \frac{S(S+1)}{2} & < b^{p  \mathcal{H}_d}\\
 \frac{(S+1)}{2} b^{(\mathcal{H}_d-d_r+1)} b^{p  d_r}  & < b^{p  \mathcal{H}_d}\\
 \frac{log\, \frac{(S+1)}{2}}{log\, b} +  d_r(p-1)+ 1 & < \mathcal{H}_d(p-1)
\end{split}
\end{equation}
In this situation, we add a constant time operation to our method that depends on the number of incremental clustering batches and the branching factor of the hierarchy. While it is difficult to find a closed form expression of when our method is faster in this situation, we note that in Eqn.~\ref{eqn:2}, the single clustering method scales linearly with the overall depth of the hierarchy, whereas the incremental method scales logarithmically with the number of batches (assuming a constant $b$). When all the other parameters are known, we can determine the maximum value of $S$ where our method is faster. This indicates that in an operational system, we should not initiate the incremental clustering operation too often or we would incur too much of a penalty.

\subsection{Implementation}
\label{impl}
A number of details need to be specified for an actual implementation of the system, including a way to represent the items in memory, a method of comparing those representations, a $\theta$ function for aggregating them to create representations of exemplar nodes, and a hierarchical clustering algorithm.

\subsubsection{Feature--Based Representation}

The feature--based representation of the binaries and exemplars is controls how they will be both compared and aggregated.  For brevity, we only provide a brief overview of the feature generation and representation for malware binaries, and refer the reader to previous works for a complete description of malware feature extraction and analysis~\cite{lakhotia2013fast, pfeffer2012malware, ruttenberg2014}.

While our system is defined to handle different types of features, a modified bag--of--words model of representing an item's distinct features provides the basis for our implementation. We extract three types of features from the malware to use as the words. The first two are the strings found in a binary and the libraries it imports. The third type of feature are n--grams over the Procedure Call Graph (PCG) of the binary. The PCG is a directed graph extracted from a binary where the nodes are procedures in the binary and the edges indicate that the source procedure made a call to the target procedure. Each n--gram is a set of $n$ procedure labels indicating a path through the graph of length $n-1$. In our implementation, we used 2--grams, which captures the content and structure of a binary while keeping the complexity of the features low.

The labels of the n--grams are based on MinHashing~\cite{broder1997resemblance}, a technique typically used to approximate the Jaccard Index \cite{jaccard1912distribution} between two sets. Each procedure is composed of a set of code blocks, and each block is represented by its BinJuice ~\cite{lakhotia2013fast}, a representation of the semantic function of a code block that also abstracts away registers and memory addresses. A hash function is used to generate a hash value for the BinJuice of every block in a procedure, and the lowest value is selected as the label for that procedure. The equivalence between procedure's hash value is an approximation of the Jaccard index between the procedures, hence we use more hash functions (four) for a better approximation. As a result, we have multiple bags of n--grams, one per hash function. So in our implementation a malware binary is represented by four bags of n--grams over the PCG, along with a bag of its distinct strings, and another bag of its imported libraries. 


\subsubsection{Creating Exemplars and Defining Distance}

We also must define methods to create exemplars from a set of binaries or other exemplars (e.g., $\theta$). As explained in Sec.~\ref{online}, we must define $\theta$ as a transitive function for proper behavior of the algorithm. We propose two methods in our implementation; intersection and weighted average. For intersection, $\theta$ is defined simply as $\theta(F_{1}, \dots, F_{n}) = F_{1} \wedge F_{2} \wedge \dots \wedge F_{n}$.

Slightly more complicated is the definition of average--based exemplars, as it is necessary to consider the number of malware binaries represented by each child of an exemplar. In this case, we must also assign a weight to each word in the bag--of--words model. We define $\theta$ for the weighted average method as
\begin{equation*}
\begin{split}
\theta(F_{1}, \dots, F_{n}) = \{(f, w_f)\}, \,\,
w_f = \frac{\sum\limits_{i=1}^n s_i*w_{f,i}}{\sum\limits_{i=1}^n s_i}
\end{split}
\end{equation*}
where $w_{f,i}$ is the weight of feature $f$ in feature set $F_i$ (or 0 if $f \notin F_i$), and $s_i$ is the number of actual malware binaries represented by $F_i$.


The distance function, $\delta$, is needed for classification, sub--hierarchy re-clustering, and the exemplar distance threshold calculation. Since exemplars must be compared to the actual malware binaries, as well as to other exemplars, the comparison method is also driven by the exemplar representation method. In our implementation, we use a weighted Jaccard distance measure on the bag--of--word features~\cite{manasse2007consistent}. The weighted Jaccard distance is calculated by dividing the sum of the minimum weights in the intersection (where it is assumed the lack of a feature is a weight of 0), by the sum of maximum weights in the union. Note that in the case where $\theta$ is the intersection function, the weights of feature words are all one, and hence the weighted Jaccard is the normal Jaccard distance. The distance between each of the three bags of features (i.e., n--grams, strings and libraries) for two binaries/exemplars are computed using $\delta$, and these three distances are combined using a weighted average to determine the overall distance.

\subsubsection{Other Parameters}

Beyond the preferred measure of distance and exemplar creation, the user is faced with additional decisions in regard to the preferred clustering algorithm, along with its settable parameters. The Neighbor Join \cite{saitou1987neighbor} and Louvain methods \cite{blondel2008fast} are both viable hierarchical clustering methods implemented in our system. While both methods have proven to be successful, specific conditions or preferences may lead a user to run one instead of the other.

Another adjustable parameter comes from the fact that the hierarchical clustering algorithms return a hierarchy that extends down to each individual item, so no items share a parent. At this point every item is in its own family. In order to produce a hierarchy that represents the optimal family assignments, the hierarchy needs to be flattened by merging families from the bottom up to the point where the items are optimally grouped under the same parent nodes. To flatten a hierarchy we perform a post--order traversal, and at each exemplar a condition is applied that determines whether to delete its children and become the parent of its grandchildren, placing them into the same family and flattening that part of the hierarchy by one level (note this will also not change the exemplar features due to exemplar transitivity). The condition we apply is to flatten whenever $\lambda * cohesion(\mathbb{C}^{2}_i) < cohesion(\mathbb{C}^{1}_i)$, where $\mathbb{C}^{n}_i$ are all the children at a depth $n$ from $E_i$, and $cohesion(\mathbb{C}^{n}_i)$ is defined as the average of the distances between all pairs of exemplars in $\mathbb{C}^{n}_i$. The other value in this condition, $\lambda$, is another parameter that can be adjusted in order to influence the flatness of the hierarchy. This operation is performed at the end of the re--clustering step before a sub--hierarchy is inserted back into the hierarchy.

The other parameter worth discussing is the re-clustering depth parameter, $d_r$, which controls the depth of the sections of hierarchy modified during re-clustering. Initially, we performed re-clustering with $d_r=2$. In more recent experiments, we have obtained higher clustering performance using a greater depth. These experiments, among others, are discussed in the following section.

\subsection{Experiments}
\label{expr}

To test our system, we performed different types of experiments to evaluate its effectiveness in several areas including the scalability of the system, the correctness of the families it produces, its sensitivity to various parameters, and its ability to adapt to changes in implementation details and how these changes affect accuracy and speed. Here we will describe the results of the correctness experiments.

\subsubsection{Datasets and Metrics}

For our correctness tests, we used a set of 8,336 malware binaries provided by MIT Lincoln Labs during the Cyber Genome program's Independent Verification and Validation (IV\&V). The dataset included 128 clean-room binaries built by MIT Lincoln Labs comprising six distinct families of malware. These 128 binaries provided the ground truth. The others were collected from the wild and provided background for the test.

The accuracy of a clustering was determined by running cluster evaluation metrics over the cluster assignments for the 128 ground truth binaries. The metrics we used were Jaccard Index, Purity, Inverse Purity, and Adjusted Rand Index. Jaccard Index is defined as the number of pairs of items of the same truth family that were correctly assigned to the same cluster, divided by the number of pairs that belong  to the same truth family and/or assigned cluster. Purity is defined as the percentage of items that are members of the dominant truth family in their assigned cluster, and Inverse Purity is the percentage of items that were assigned to their truth family's dominant cluster. Adjusted Rand Index is a version of the Rand Index cluster accuracy measure that is adjusted to account for the expected success rate when clusters are assigned at random.

\subsubsection{Base Accuracy Results}

To evaluate the correctness of our incremental system, we compared the clustering produced by our system to that produced by simply clustering the entire dataset at once, both using the Neighbor Join algorithm for clustering. The incremental run used batches of 1,000 binaries and set $d_r$ to 6 (to put that into perspective, a resulting cluster hierarchy over this dataset is typically on the order of 50-100 generations deep at its maximum). Fig.~\ref{fig:correct_eval} shows the evaluation metrics for both methods. Both score well compared to the families on the 128 ground truth binaries, where the incremental method is only slightly worse than the single batch version. The entire incremental result (all 8,336 binaries) was also scored using the single--batch run as the truth, which shows the overall high similarity between the two cluster assignments for the entire dataset, not just the clean room binaries.

\begin{figure*}[t]
\centering
\subfigure[Correctness evaluation]{
\includegraphics[width=.48\textwidth]{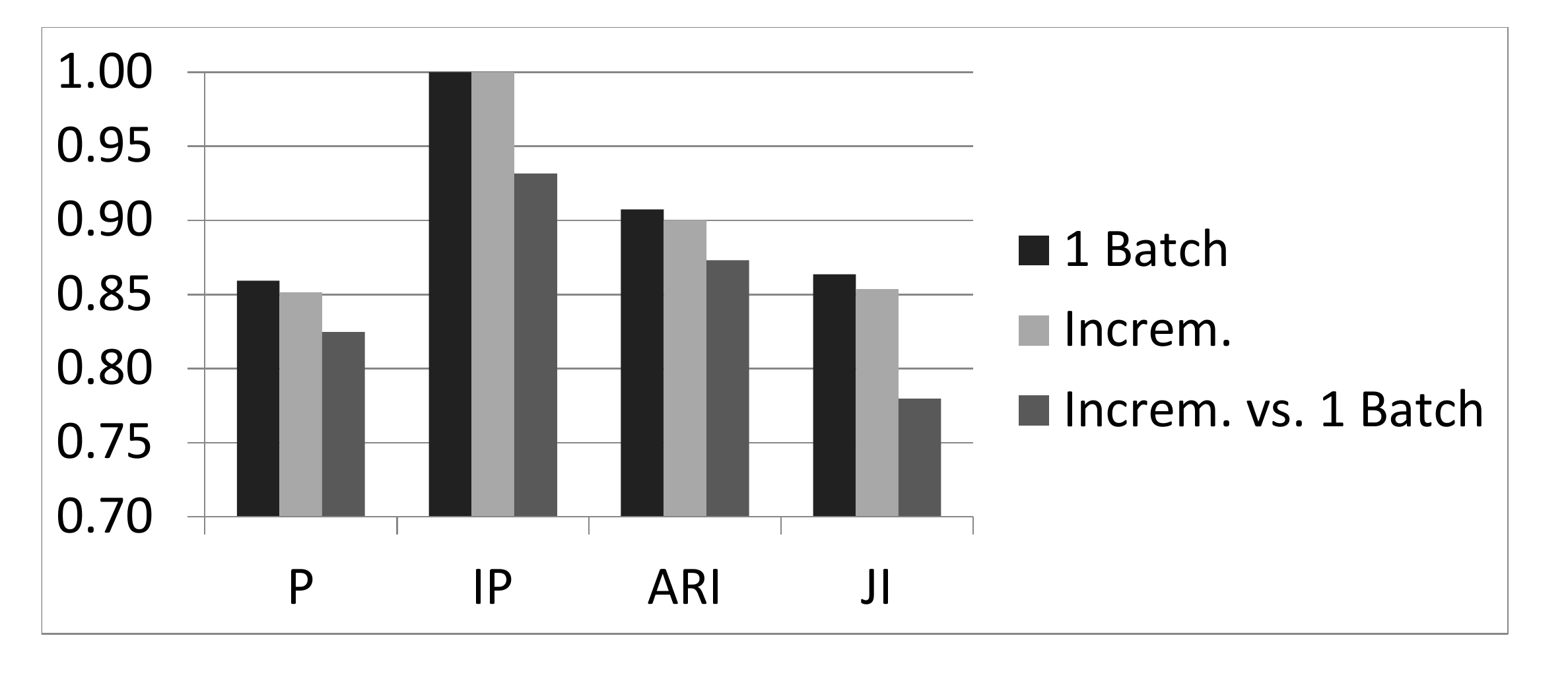}
\label{fig:correct_eval}
}
\subfigure[Correctness run time]{
\includegraphics[width=.48\textwidth]{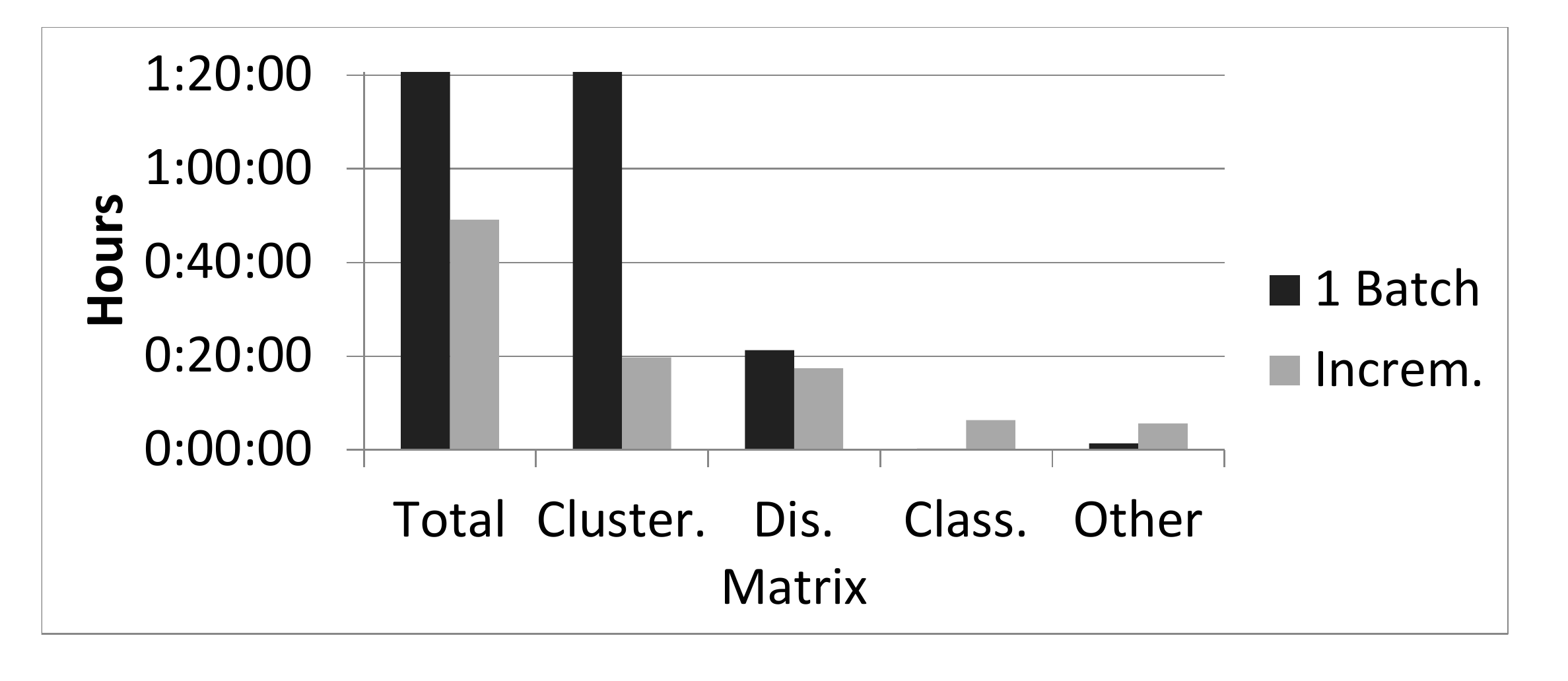}
\label{fig:correct_timing}
}
\vspace{-3mm}
\caption[]{Correctness tests: performance agaist the cluster evaluation metrics (Purity, Inverse Purity, Adjusted Rand Index, and Jaccard Index; higher is better) and run times (total run time, time spent clustering, calculating distance matrices, classifying new binaries, and other operations; lower is better).\label{fig:synth_all}}
\end{figure*}

The most obvious measurable difference between the two approaches is speed, shown in Fig.~\ref{fig:correct_timing}, where the time spent on each task in the algorithm is detailed. Clustering the binaries in a single batch took over five hours, whereas the incremental system completed in under an hour. In the single--batch run, the application spent 21 minutes calculating the distance matrix, after which the vast majority of the time was spent on clustering. In the incremental run, the application only spent a total of about 20 minutes clustering, which occurred when the first batch was introduced to the system and whenever subsequent batches initiated re--clustering. Neighbor Join is a polynomial time algorithm, so the many smaller clustering operation performed during the incremental run finish orders of magnitude faster than a single clustering of the entire dataset. Besides the time spent building the distance matrix prior to clustering, the incremental run also spent some time classifying new binaries as they were introduced. In both runs, some other time was spent persisting the database, manipulating data structures, initializing and updating exemplar features, and other such overhead.


As can be seen, this hierarchical clustering approach can assist in malware analysts' need to efficiently and effectively group together and classify malware in an online fashion. In this way we can support both the triage and deep--dive stages of the malware analysis work flow. Organizing malware into families also provides a basis for performing some of MAAGI's other types of analysis, including lineage generation and component identification, which can provide insight into the specific relationships between malware binaries within and across families.
 
\section{MAAGI Analysis Framework}

The MAAGI system incorporates several malware analysis tools that rely heavily on AI techniques to analysis capabilities to the user. Each of these techniques build upon the malware hierarchy described in Sec.~\ref{hier} that serves as the persistant organization of detected malware into families. We present the details of each of the analyses in the following sections.

\subsection{Component Identification}

Malware evolution is often guided by the sharing and adaptation of functional components that perform a desired purpose. Hence, the ability to identify shared \textit{components} of malware is central to the problem of determining malware commonalities and evolution. A component can be thought of as a region of binary code that logically implements a ``unit'' of malicious operation. For example, the code responsible for key--logging would be a component. Components are intent driven, directed at providing a specific malware capability. Sharing of functional components across a malware corpus would thus provide an insight into the functional relationships between the binaries in a corpus and suggest connection between their attackers.

Detecting the existence of such shared components is not a trivial task. The function of a component is the same in each malware binary, but the \textit{instantiation} of the component in each binary may vary. Different authors or different compilers and optimizations may cause variations in the binary code of each component, hindering detection of the shared patterns. In addition, the detection of these components is often unsupervised, that is the number of common components, their size, and their functions may not be known {\it a priori}. 

The component identification analysis tool helps analysts find shared functional components in a malware corpus. Our approach to this problem is based on an ensemble of techniques from program analysis, functional analysis, and artificial intelligence. Given a corpus of (unpacked) malware binaries, each binary is reverse engineered and decomposed into a set of smaller functional units, namely procedures (approximately corresponding to language functions). The code in each procedure is then analyzed and converted into a representation of the code's semantics. Our innovation is in the development of a unsupervised method to learn the components in a corpus. First, similar procedures from across the malware corpus are clustered.  The clusters created are then used as features for a second stage clustering where each cluster represents a component. 

Under supervision of DARPA, an independent verification and validation (IV\&V) of our system was performed by Massachusetts Institute of Technology Lincoln Laboratory (MITLL). Their objective was to measure the effectiveness of the techniques under a variety of obfuscations used in real malware. The team constructed a collection of malware and experiments on this collection indicate that our method is very effective in detecting shared components in malware repositories.

\subsubsection{Generative Process}
\label{sec:clustering}

\begin{figure}[t]
\centering 
\includegraphics[width=.75\columnwidth]{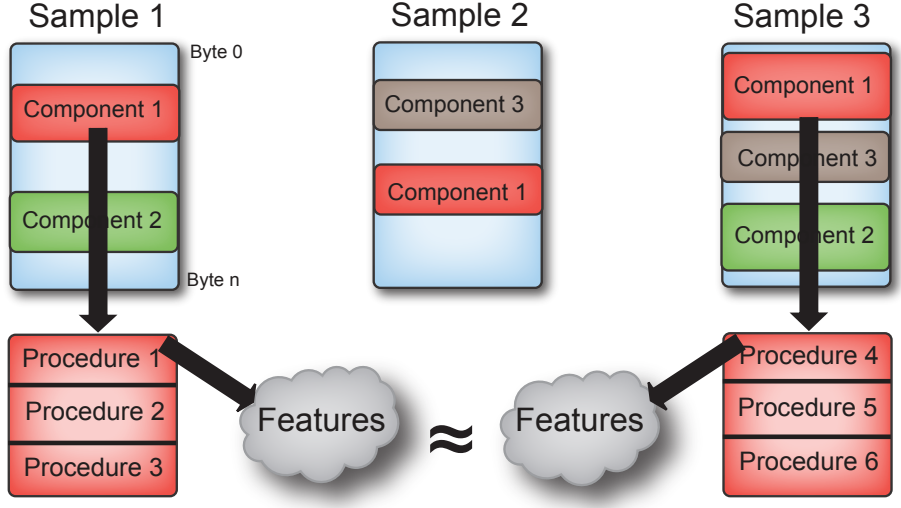}
\vspace{-5mm}
\caption{Component generative process in malware. Instantiations of components in different malware binaries should have similar features. \label{fig:GenProcess}}
\end{figure}

The basic idea of the unsupervised learning task can be thought of as reverse engineering a malware generative process, as shown in Fig.~\ref{fig:GenProcess}. A malware binary is composed of several shared components that perform malicious operations. Each component in turn is composed of one or more procedures. Likewise, we assume each procedure is represented by a set of features; in our system, features are extracted from the code blocks in the procedure, but abstractly, can be any meaningful features from a procedure. 

The main idea behind our method is that features from the procedures should be similar between \textit{instances} of the same component found in different binaries. Due to authorship, polymorphism, and compiler variation or optimizations, they may not be exact. However, we expect that two functionally similar procedures instantiated in different binaries should be more similar to each other than to a random procedure from the corpus. This generative process provides the foundation for our learning approach to discovery and identification of components.

\subsubsection{Basic Algorithm}
\begin{figure}[t]
\centering 
\includegraphics[width=.65\columnwidth]{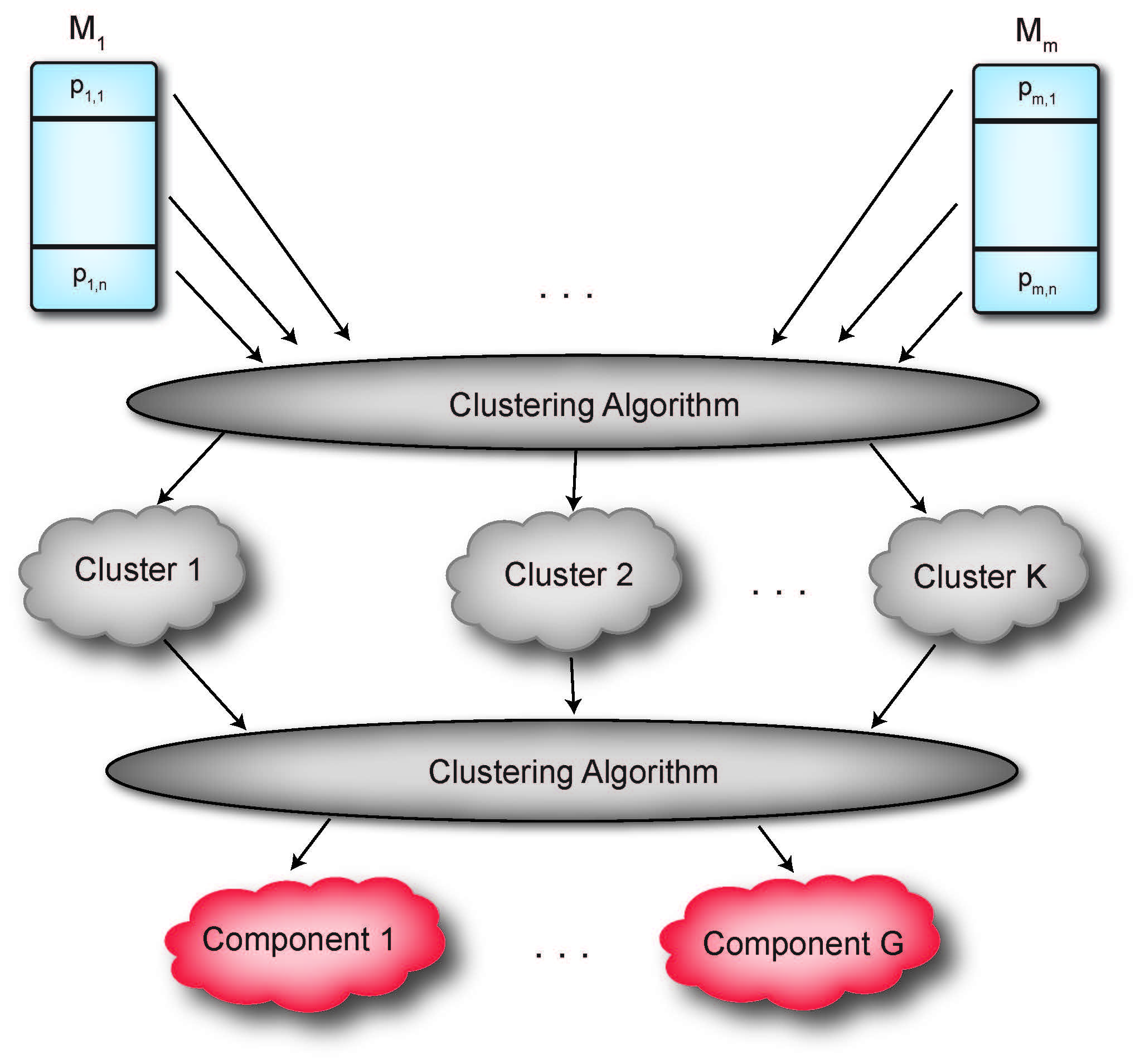}
\vspace{-5mm}
\caption{Two-stage clustering procedure to identify shared components in a malware corpus. Procedures are clustered based on feature similarity, then the centroids are converted to the space of binaries and clustered again. The resulting clusters are the shared functional components. \label{fig:TwoStage}}
\end{figure}

Building off of the generative process that underlies components, we develop a two--stage clustering method to identify shared components in a set of malware binaries, outlined in Fig.~\ref{fig:TwoStage}. 
For the moment, we assume that each procedure is composed of a set of meaningful features that describe the function of the procedure. Details on the features we employ in our implementation and evaluation can be found in Sections~\ref{sec:overview} and \ref{sec:testing}.

Given a corpus of malware binaries $\mathbb{M} = \{M_1, \dots, M_{|\mathbb{M}|}\}$, we assume it contains a set of shared functional components $\mathbb{T} = \{T_1, \dots, T_{|\mathbb{T}|}\}$. However, we are only able to observe $T_{i,j}$, which is the \textit{instantiation} of the $i^{th}$ component in $M_j$. If the component is not part of the binary $M_j$, then $T_{i,j}$ is undefined. We also denote $T_{i,*}$ as the set of all instantiations of the $i^{th}$ component in the entire corpus. Note that $T_{i,j}$ may not be an exact code replica of $T_{i,k}$, since the components could have some variation due to authorship and compilation. Each $M_j$ consists of a set of procedures $p_{i,j}$, denoting the $i^{th}$ procedure in the $j^{th}$ binary.

\paragraph{Procedure-Based Clustering}
The first stage of clustering is based on the notion that if $T_{i,j} \in M_{j}$ and $T_{i,k} \in M_{k}$, then at least one procedure in $M_{j}$ must have high feature similarity to a procedure in $M_k$.  Since components are shared across a corpus and represent common functions, even among different authors and compilers, it is likely that there is some similarity between the procedures. We first start out with a strong assumption and assert that the components in the corpus satisfy what we term as the \textit{component uniqueness} property.
\begin{definition}
\textbf{Component Uniqueness}. A component satisfies the component uniqueness property if the following relation holds true for all instantiations of $T_{i,*}$:
\begin{equation*}
\begin{split}
 \forall \: p_{x,j} \in T_{i,j}, & \: \exists \: p_{a,k} \in T_{i,k} \:  | \: d(p_{x,j}, p_{a,k}) \ll d(p_{x,j}, p_{*,*}), \\
 & \forall \: p_{*,*} \in T_{*,k}, \: T_{i,j}, T_{i,k} \in T_{i,*}
 \end{split}
\end{equation*}
\label{def:unique}
\end{definition}
\vspace{-2mm}
where $d(p_{*,*}, p_{*,*})$ is a distance function between the features of two procedures. Informally, this states that \textit{all} procedures in each instantiation of a component are much more similar to a single procedure from the same component in a different binary than to all other procedures.

Given this idea, the first step in our algorithm is to cluster the \textit{entire} set of procedures in a corpus. These clusters represent the common functional procedures found in \textit{all} the binaries, and by the component uniqueness property, similar procedures in instantiations of the same component will tend to cluster together. Of course, even with component uniqueness, we cannot guarantee that all like procedures in instantiations of a component will be clustered together; this is partially a function of the clustering algorithm employed. However, as we show in later sections, with appropriately discriminative distance functions and agglomerative clustering techniques, this clustering result is highly likely. 

These discovered clusters, however, are not necessarily the common components in the corpus. Components can be composed of multiple procedures, which may exhibit little similarity to each other (uniqueness does not say anything about the similarity between procedures in the same component). In Fig.~\ref{fig:GenProcess}, for example, Component 1 contains three procedures in binary 1 and binary 3. After clustering, three clusters are likely formed, each with one procedure from binary 1 and 3. This behavior is often found in malware: A component may be composed of a procedure to open a registry file and another one to compute a new registry key. Such overall functionality is part of the same component, yet the procedures could be vastly dissimilar based on the extracted features. However, based on component uniqueness, procedures that are part of the same shared component should appear in the same \textit{subset} of malware binaries. 

\paragraph{Binary-Based Clustering}

\begin{figure}[t]
\centering
\includegraphics[width=.65\columnwidth]{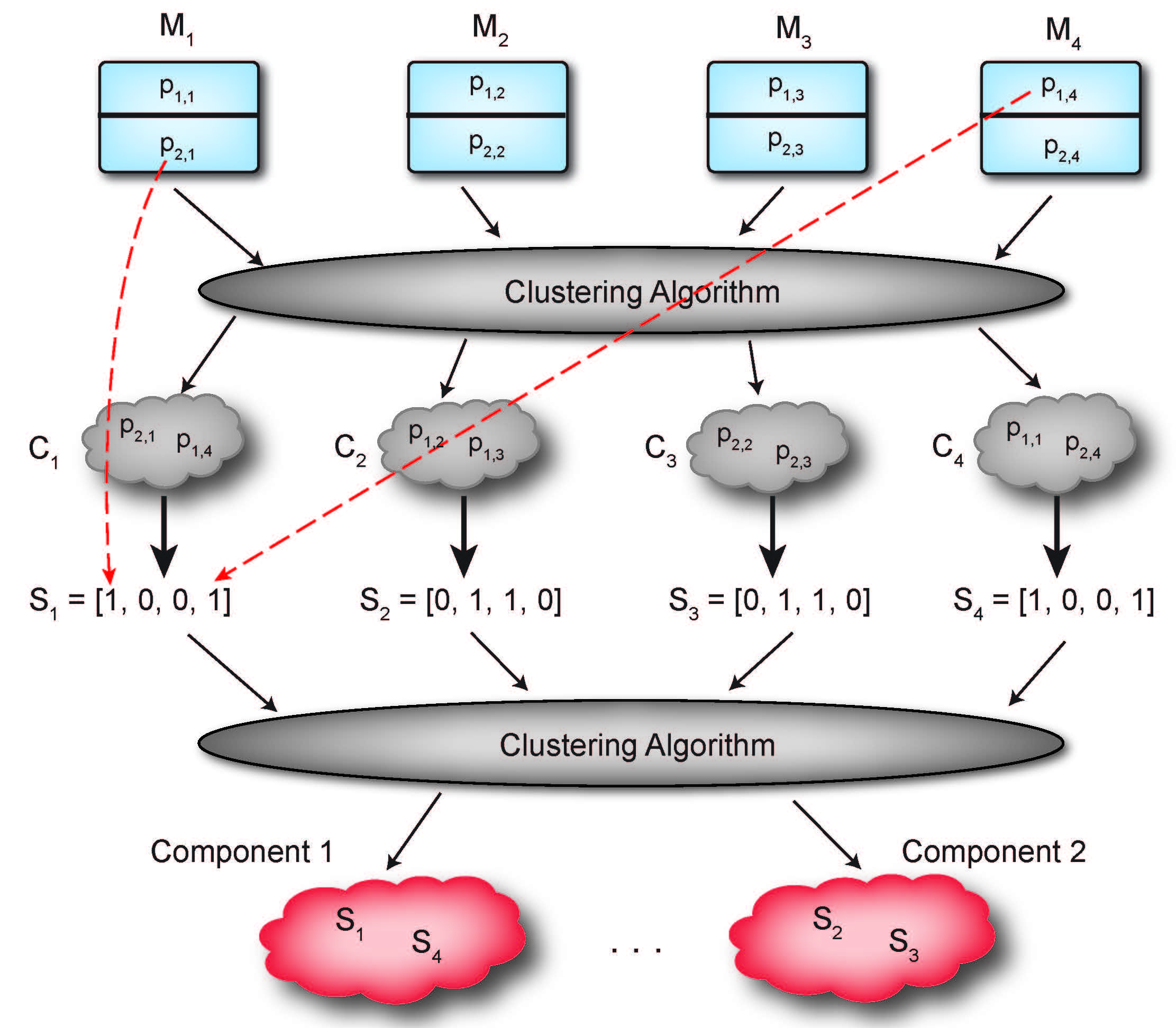}
\vspace{-5mm}
\caption{Conversion of procedure clusters into a vector space representing the presence of a procedure from each binary in the cluster, and then the subsequent clustering of the vector space representations. The clusters of vector space representations are the shared components. \label{fig:SpaceConversion}}
\end{figure}

Next, we perform a second step of clustering on the results from the first stage, but first convert the clusters from the space of procedure similarity to what we denote as binary similarity. Let $C_i$ represent a procedure cluster, where $p_{x_1,y_1}, \dots, p_{x_k,y_k} \in C_i$ are the procedures from the corpus that were placed into the cluster. We then represent $C_i$ by a vector $\vec{S_i}$, where
\begin{equation}
\label{eqn:ca1}
 S_i[j] = \begin{cases}
   1 & \text{if } \exists \,\, p_{x_k, y_k} \in C_i \,\, |\, y_k = j \\
   0       & \text{otherwise}
  \end{cases}
\end{equation}
That is, $S_i[j]$ represents the presence of a procedure from $M_j$ in the cluster. In this manner, each procedure cluster is converted into a point in an $|\mathbb{M}|$-dimensional space, where $|\mathbb{M}|$ is the number of malware binaries in the corpus. Consider the example shown in Fig.~\ref{fig:SpaceConversion}. The procedures in the corpus have been clustered into four unique procedure clusters. Cluster $C_1$ contains procedures $p_{2,1}$ from $M_1$, and $p_{1,4}$ from binary $M_4$. Hence, we convert this cluster into the point $\vec{S_1} = [1, 0, 0, 1]$, and convert the other clusters into the vector space representation as well. 

This conversion now allows us to group together clusters that appear in the same \textit{subset} of malware binaries. Using component uniqueness and a reasonable clustering algorithm in the first step, it is likely that a $p_{x,j} \in T_{i,j}$ has been placed in a cluster $C_v$ with other like procedures from $T_{i,*}$. Similarly, it is also likely that a different procedure in the same component instance, $p_{y,j}$, is found in cluster $C_w$ with other procedures from $T_{i,*}$. Since $C_v$ and $C_w$ both contain procedures from $T_{i,*}$, then they will contain procedures from the same set of binaries, and therefore their vector representations $\vec{S_v}$ and $\vec{S_w}$ will be very similar. We can state this intuition more formally as
\begin{equation*}
\begin{split}
 d(\vec{S_v}, \vec{S_w}) \approx 0 \Rightarrow p_{x,j},p_{y,k} \, \in \, T_{i,*} \, \forall \,\, p_{x,j},p_{y,k} \, \in\,  \{C_v, C_w\}
 \end{split}
\end{equation*}
Based on these intuitions, we then cluster the newly generated $\vec{S_i}$ together to yield our components. Looking again at Fig.~\ref{fig:SpaceConversion}, we can see that when cluster $C_1$ and $C_4$ are converted to $S_1$ and $S_4$, they contain the same set of binaries. These two procedure groups therefore constitute a single component instantiated in two binaries, and would be combined into a single cluster in the second clustering step, as shown.
\paragraph{Analysis}
Provided the component uniqueness property holds for all components in the data set, then the algorithm is very likely to discover all shared components in a malware corpus. However, if two components in the corpus are found in the exact same subset of malware binaries, then they become indistinguishable in the second stage of clustering; the algorithm would incorrectly merge them both into a single cluster. Therefore, if each component is found in the corpus according to some prescribed distribution, we can compute the probability that two components are found in the same subset of malware. 

Let $\mathcal{T}_i$ be a random variable denoting the set of malware binaries that contain the $i^{th}$ component. If $\mathcal{T}_i$ is distributed according to some distribution function, then for some $t = \{M_x, \dots, M_y\} \subseteq \mathbb{M}$, we denote the probability of the component being found in exactly the set $t$ as $Pr(\mathcal{T}_i = t)$. Assuming uniqueness holds, we can now determine the probability that a component is detected in the corpus.
\begin{theorem}
\label{thm:thm1}
The probability that the $i^{th}$ component is detected in a set of malware binaries is
\begin{equation*}
\underset{t \in \text{all subsets of}\, \mathbb{M}}{\sum} \!\!\!\!\!\!\!\!\!\! Pr(\mathcal{T}_k \neq t, \dots, \mathcal{T}_i = t, \dots, \mathcal{T}_k \neq t) 
\end{equation*}
\end{theorem}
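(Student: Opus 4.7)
The plan is to decompose the detection event by the particular subset of malware in which component $i$ appears, then use the component uniqueness assumption (together with the two-stage algorithm of Sec.~\ref{sec:clustering}) to reduce the detection condition to a simple non-collision statement among the random variables $\mathcal{T}_1,\dots,\mathcal{T}_{|\mathbb{T}|}$.

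First I would identify the failure mode of the algorithm under the uniqueness assumption. By uniqueness, the first (procedure-based) clustering stage reliably groups procedures from the same component instantiation together, so each component $T_i$ contributes one or more procedure clusters whose vector-space representation $\vec{S}$ has support exactly $\mathcal{T}_i$. Consequently, in the second (binary-based) stage, the procedure clusters corresponding to $T_i$ collapse to a single component if and only if no other component $T_k$ happens to produce vectors with the same support. Thus component $i$ is \emph{detected} precisely when $\mathcal{T}_k \neq \mathcal{T}_i$ for every $k \neq i$.

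Second, I would rewrite this detection event as a disjoint union over the possible values $t$ of $\mathcal{T}_i$. Since $\mathcal{T}_i$ takes exactly one value, the events $\{\mathcal{T}_i = t\}$ for $t$ ranging over subsets of $\mathbb{M}$ partition the sample space. Intersecting with the non-collision condition gives
\begin{equation*}
\{\text{$i$ detected}\} = \bigsqcup_{t \subseteq \mathbb{M}} \bigl\{\mathcal{T}_i = t\bigr\} \cap \bigcap_{k \neq i} \bigl\{\mathcal{T}_k \neq t\bigr\}.
\end{equation*}
Applying countable additivity of probability to this disjoint union yields exactly the claimed sum. No independence assumption among the $\mathcal{T}_k$ is needed, since the events inside each summand are the joint event written in the theorem statement.

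The main obstacle is the initial step: justifying that the \emph{only} way detection can fail (under the uniqueness hypothesis) is a subset collision. This requires being careful about two things: (i) that uniqueness, which is stated per procedure, really does guarantee each component instantiation's procedures land in clusters whose support is exactly $\mathcal{T}_i$ rather than some superset or subset; and (ii) that the second-stage clustering is assumed to merge equal-support vectors and separate unequal-support ones. Both are implicit in the setup of Sec.~\ref{sec:clustering}, but I would state them explicitly as the content of the reduction before invoking the disjoint-union argument above.
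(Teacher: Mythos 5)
Your proposal is correct and follows essentially the same route as the paper's proof: condition on the value $t$ of $\mathcal{T}_i$, observe that detection under the uniqueness assumption is exactly the event that no other component occupies the same subset, and sum the joint probabilities $Pr(\mathcal{T}_k \neq t, \dots, \mathcal{T}_i = t, \dots, \mathcal{T}_k \neq t)$ over all subsets of $\mathbb{M}$. The only difference is that you spell out the reduction of ``detected'' to ``no subset collision'' (and the disjointness of the partition by $\mathcal{T}_i$'s value) more explicitly than the paper, which simply asserts it.
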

\begin{proof}
 If $\mathcal{T}_i = t_j$ for some $t_j \subseteq \mathbb{M}$, the component will be detected if no other component in the corpus is found in the exact same subset. That is, $\mathcal{T}_k \neq t_j$ for all other components in the corpus. Assuming nothing about component or binary independence, the probability of no other component sharing $t_j$ is the joint distribution $Pr(\mathcal{T}_k \neq t_j, \dots, \mathcal{T}_i = t_j, \dots, \mathcal{T}_k \neq t_j)$. Summing over all possible subsets of $\mathbb{M}$ then yields Thm~\ref{thm:thm1}.
\end{proof}

Thm.~\ref{thm:thm1} assumes nothing about component and binary independence. However, if we do assume that components are independent of each other and a component $T_i$ appears independently in each binary with probability $p_i$, then $\mathcal{T}_i$ is distributed according to a binomial distribution. As such, we can compute a lower bound for the probability of detection by ignoring equality between distribution sets as
\begin{equation*}
\begin{split}
 Pr(\text{Dete}&\text{ction of $T_i$})\\
  = & \underset{t \in \text{all subsets of}\, \mathbb{M}}{\sum} \!\!\!\!\!\!\!\!\!\! Pr(\mathcal{T}_k \neq t, \dots, \mathcal{T}_i = t, \dots, \mathcal{T}_k \neq t)  \\
               = & \underset{t \in \text{all subsets of}\, \mathbb{M}}{\sum} \!\!\!\!\!\!\!\!\!\! Pr(\mathcal{T}_i = t) \prod_{k \neq i} (1-Pr(\mathcal{T}_k = t))  \\
               \geq & \sum_{x=0}^{|\mathbb{M}|} Pr(|\mathcal{T}_i| = x) \prod_{k \neq i} (1-Pr(|\mathcal{T}_k| = x))\\
               = & \sum_{x=0}^{|\mathbb{M}|} Bin(x,|\mathbb{M}|, p_i)  \prod_{k \neq i} (1-Bin(x,|\mathbb{M}|, p_k))\\
\end{split}
\end{equation*}
where $Bin(\cdot)$ is the binomial probability distribution function. This lower bound can provide us with reasonable estimates on the probability of detection. For instance, even in a small data set of 20 binaries with two components that both have a 20\% chance of appearing in any binary, the probability of detection is at least 0.85.

Based on component uniqueness, the basic algorithm can easily locate the shared components in a set of malware binaries. However, in practice, component uniqueness rarely holds in a malware corpus. That is, it is likely that some procedures in different components are quite similar. This situation can be quite problematic for the basic algorithm. In the next section, we relax the component uniqueness assumption and detail a more sophisticated algorithm intended to identify components.

\subsubsection{Assumption Relaxation}
\begin{figure}[t]
\centering
\includegraphics[width=.65\columnwidth]{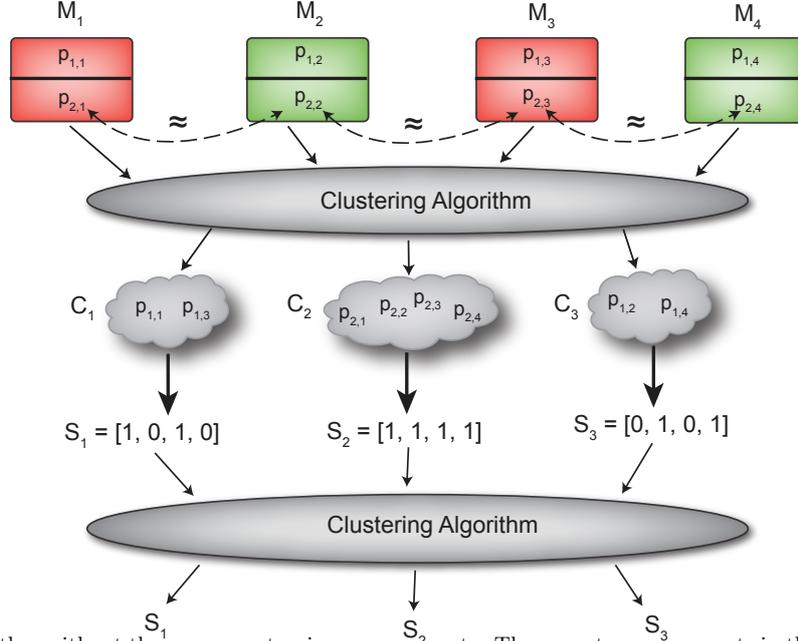}
\vspace{-7mm}
\caption{Basic algorithm without the component uniqueness property. There are two components in the set, yet the end result clustering produces three, of which one is a combination of two components\label{fig:NonUnique}}
\end{figure}
When component uniqueness does not hold, the basic algorithm may not correctly identify components. Consider the example shown in Fig.~\ref{fig:NonUnique}. There are two components in four binaries, each composed of two procedures. Assuming component uniqueness does not hold, then the second procedure in each binary could show high similarity to each other (it is possible they perform some basic function to set up malicious behavior). After the first step, $p_{2,1}$, $p_{2,2}$, $p_{2,3}$, and $p_{2,4}$ are placed in the same cluster; this results in creation of $\vec{S_2}$ that does not resemble any other cluster vectors. Hence, any clustering of $\vec{S_2}$ with $\vec{S_1}$ or $\vec{S_3}$ will result in a misidentification of the procedures in each component.

To remediate this error, we utilize an algorithm that ``splits'' clusters discovered in the first step of the algorithm before the second stage of clustering is performed. This requires that we relax the component uniqueness assumption in Def.~\ref{def:unique} to what we term as \textit{procedure uniqueness}, which states that only \textit{one} procedure in an instantiation of a component must exhibit high similarity to a procedure in another instantiation. The intuition is that after the first stage of clustering, there are clusters $C'_{1} \dots C'_{|\mathbb{T}|}$ that each contain the set of procedures $p'_{i,*}$. That is, from procedure uniqueness and a reasonable clustering algorithm, it is highly likely that we get $|\mathbb{T}|$ clusters, one for each component in the data set. We do not discuss the detailed algorithmic process and theoretic limits of splitting in this work, but refer the reader to Ruttenberg et. al.~\cite{ruttenberg2014} for more details.

\subsubsection{Implementation}
\label{sec:overview}
Our component identification system is intended to discover common functional sections of binary code shared across a corpus of malware. The number, size, function and distribution of these components is generally unknown, hence our system architecture reflects a combination of unsupervised learning methods coupled with semantic generalization of binary code. 
The system uses two main components:
\begin{enumerate}
\item{BinJuice}: To extract the procedures and generate a suitable $Juice$ features.
\item{Clustering Engine}: To perform the actual clustering based on the features.
\end{enumerate}

We use Lakhotia et al.'s BinJuice system~\cite{lakhotia2013fast} to translate the code of each basic block into four types of features: {\tt code}, {\tt semantics}, {\tt gen\_semantics}, and {\tt gen\_code}. Since each of the features are strings, they may be represented using a fixed size hash, such as md5. As a procedure is composed of a set of basic blocks, a procedure is represented as an unordered set of hashes on the blocks of the procedure.  We measure similarity between a pair of procedures using the Jaccard index~\cite{theodoridis2008pattern} of their sets of features.

\subsubsection{Clustering Engine}
For the first stage of clustering, we choose to use a data driven clustering method. Even if we know the number of shared components in a corpus, it is far less likely that we will know how many procedures are in each component. Thus, it makes sense to use a method that does not rely on prior knowledge of the number of procedure clusters. 

We use Louvain clustering for the procedure clustering step~\cite{blondel2008fast}. Louvain clustering is a greedy agglomerative clustering method, originally formulated as a graph clustering algorithm that uses modularity optimization~\cite{newman2006modularity}. We view procedures as nodes in a graph and the weights of the edges between the nodes as the Jaccard index between the procedure features. Modularity optimization attempts to maximize the modularity of a graph, which is defined as groups of procedures that have higher intra--group similarity and lower inter--group similarity than would be expected at random. Louvain clustering iteratively combines nodes together that increases the overall modularity of the graph until no more increase in modularity can be attained. 

For the second stage, we experimented with two different clustering methods: Louvain and K--means. These methods represent two modalities of clustering, and various scenarios may need different methods of extracting components. For instance, in situations where we know a reasonable upper bound on the number of components in a corpus, we wanted to determine if traditional iterative clustering methods (i.e., K--means) could outperform a data driven approach. In the second step of clustering, the $L_2$ distance between vectors was used for K--means, and since Louvain clustering operates on similarity (as opposed to distance), an inverted and scaled version of the $L_2$ distance was employed for the second stage Louvain clustering.

\subsubsection{Experiments}
\label{sec:testing}

Since it is quite a challenge to perform scientifically valid controlled experiments that would estimate the performance of a malware analaysis system in the real-world, MITLL was enlisted (by DARPA) to create a collection of malware binaries with byte--labeled components. That is, for each component they know the exact virtual memory addresses of each byte that is part of the component in every binary. We tested the performance of the component identification algorithm using this data set. 

In all tests, the algorithms do not have prior knowledge of the number of components in the data set. For the K-means tests, we set a reasonable upper bound on the estimated number of components ($K=50$). We used two metrics to gauge the effectiveness of our method. The first is using the Jaccard index to measure the similarity between the bytes identified as components in our algorithm and the ground truth byte locations. We also used the Adjusted Rand Index~\cite{hubert1985comparing} to measure how effective our algorithm is at finding binaries with the same components. 

\paragraph{Results}
We ran all of our tests using the two clustering algorithms (Louvain and K-means), and additionally tested each method with and without splits to determine how much the relaxation of component uniqueness helps the results. Note that no parameters (besides $K$) were needed for evaluation; we utilize a completely data driven and unsupervised approach.
\label{sec:darpa}

\begin{figure}
\centering 
\subfigure[gen\_code]{
\includegraphics[width=.45\textwidth]{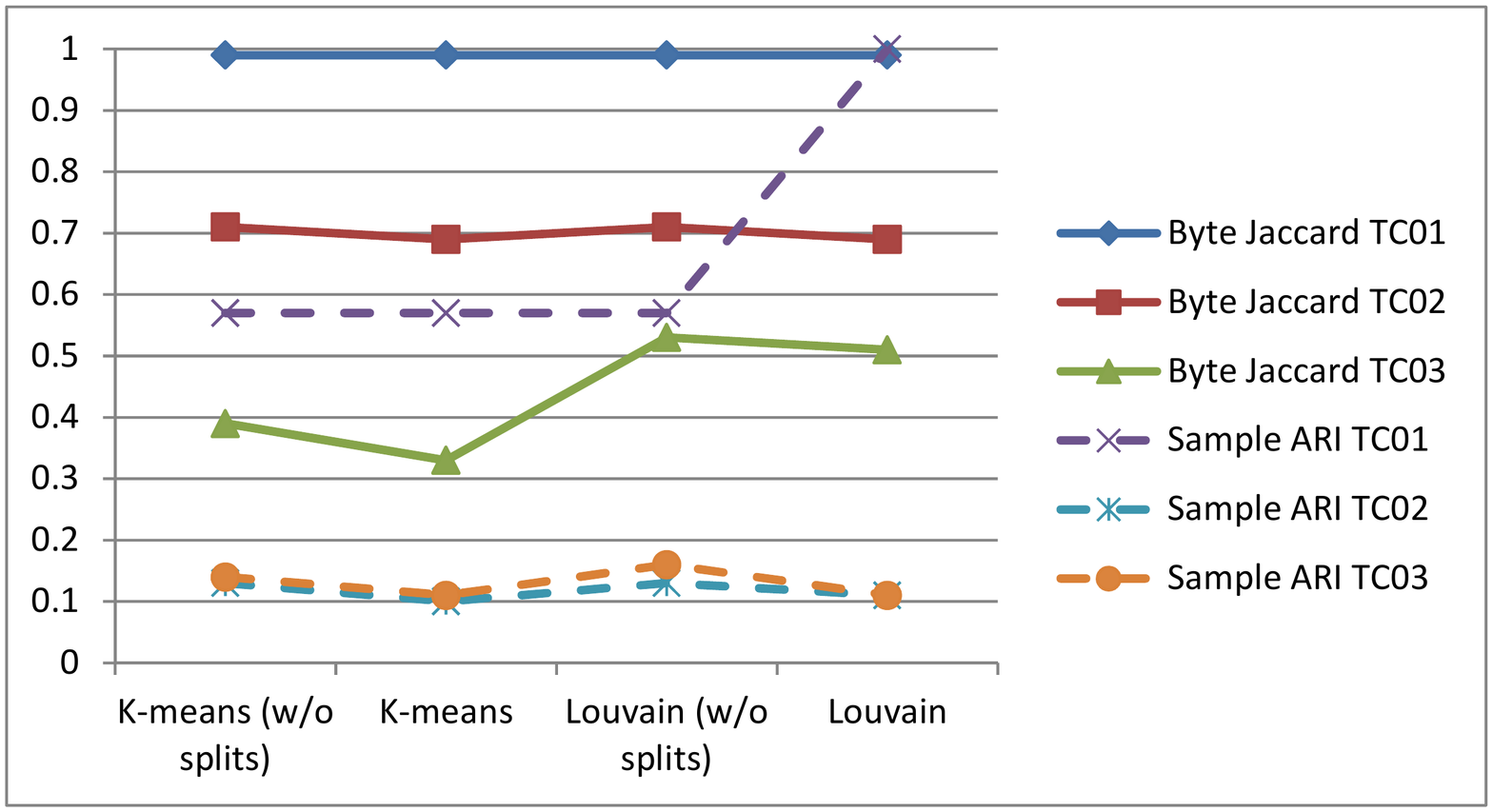}
\label{fig:RRgen_code}
}
\subfigure[gen\_semantics]{
\includegraphics[width=.45\textwidth]{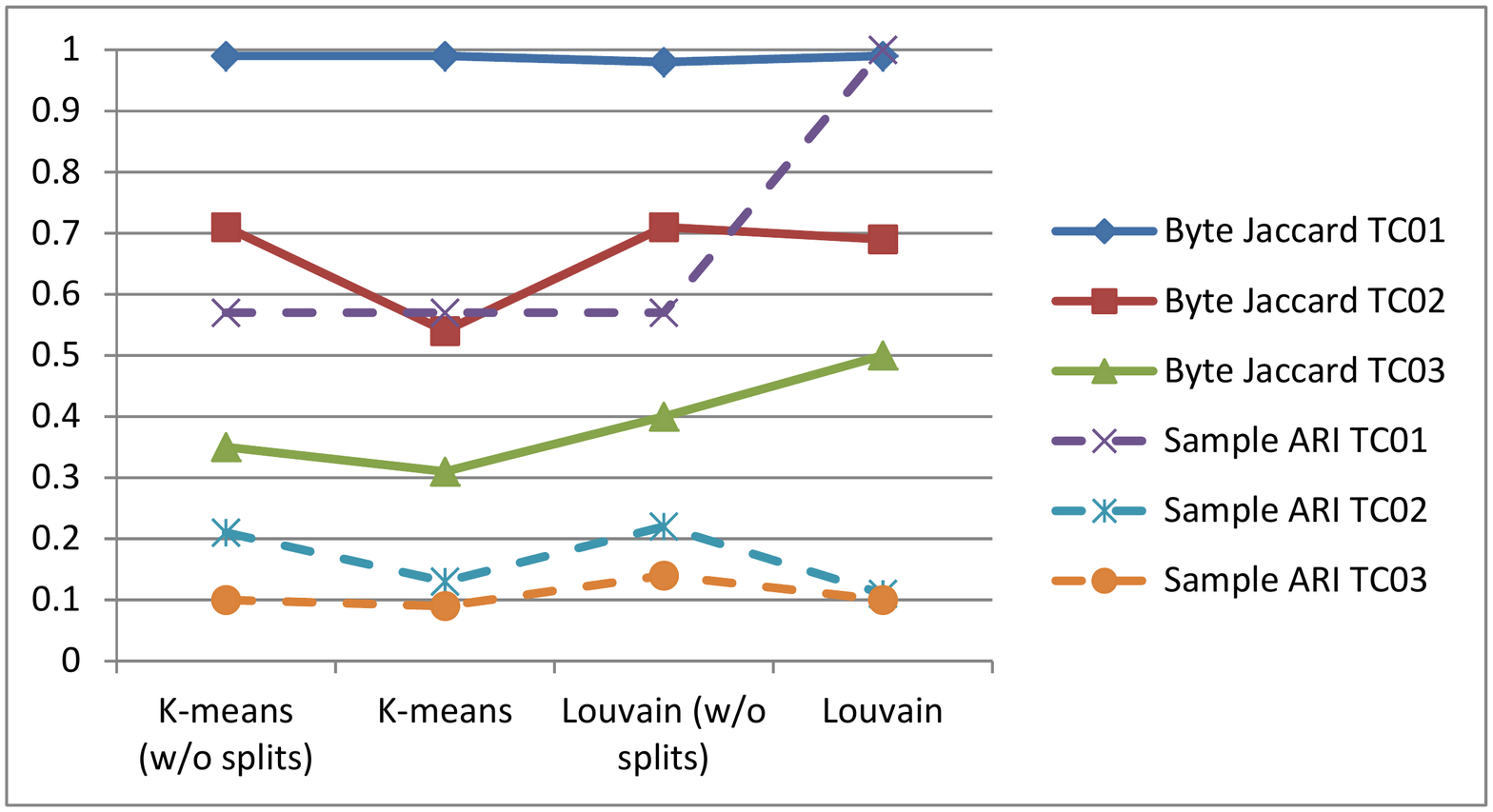}
\label{fig:RRgen_semantics}
}
\caption[]{Byte Jaccard and binary ARI comparisons of the different methods on the IV\&V data-set using two of the BinJuice features. \label{fig:RR}}
\end{figure}

The results of the component identification are shown in Fig.~\ref{fig:RR}, where each metric is shown for three different test sets derived from the IV\&V data (TC1, TC2 and TC3). The {\tt code} and {\tt semantics} features, as expected, produced inferior results as compared to {\tt gen\_code} and {\tt gen\_semantics} features during initial testing. Hence subsequent testing on those feature was discontinued.

In general, all four methods have fairly low ARIs regardless of the BinJuice feature. This indicates that in our component identifications the false positives are distributed across the malware collection, as opposed to concentrated in a few binaries. Furthermore, as indicated by the Jaccard index results, the misclassification rate at the byte level is not too high. The data also shows that the Louvain method outperforms K-means on all BinJuice features, though in some cases the splitting does not help significantly. Note that the difference between Louvain with and without splitting is mainly in the binary ARI. Since Louvain without splitting is not able to break clusters up, it mistakenly identifies non--component code in the malware as part of a real component; hence, it believes that binaries contain many more components than they actually do. These results demonstrate the robustness of the Louvain method and the strength of the BinJuice generated features. The data also shows that relaxing the component uniqueness property can improve the results in real malware.

\paragraph{Study with Wild Malware}
\label{sec:wild}
\begin{figure}[t]
\centering
\subfigure[Components per binary]{
\includegraphics[width=.4\textwidth]{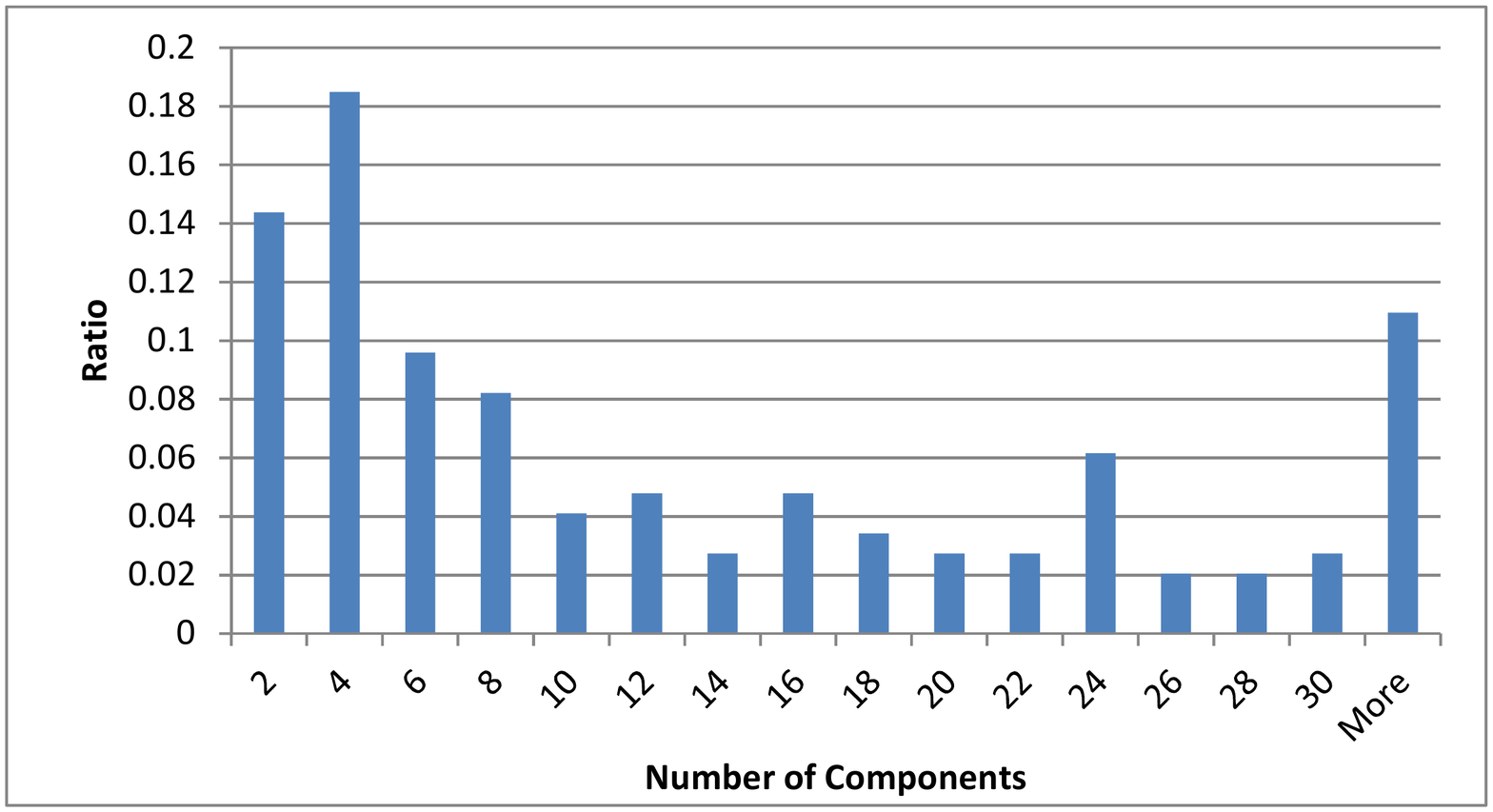}
\label{fig:darpahist}
}
\subfigure[binaries per component]{
\includegraphics[width=.4\textwidth]{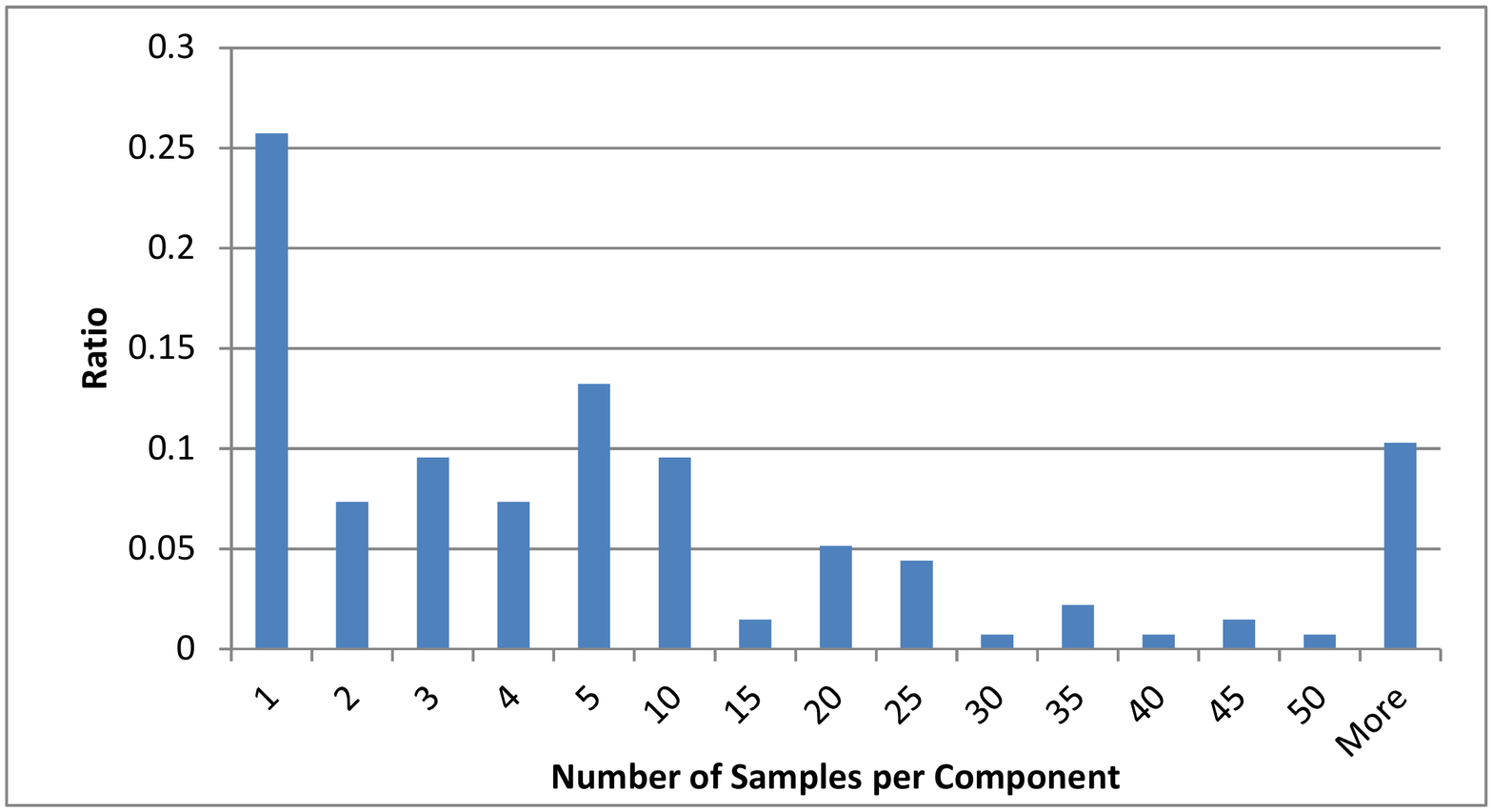}
\label{fig:comphist}
}
\subfigure[Component size compared to binaries per component]{
\includegraphics[width=.4\textwidth]{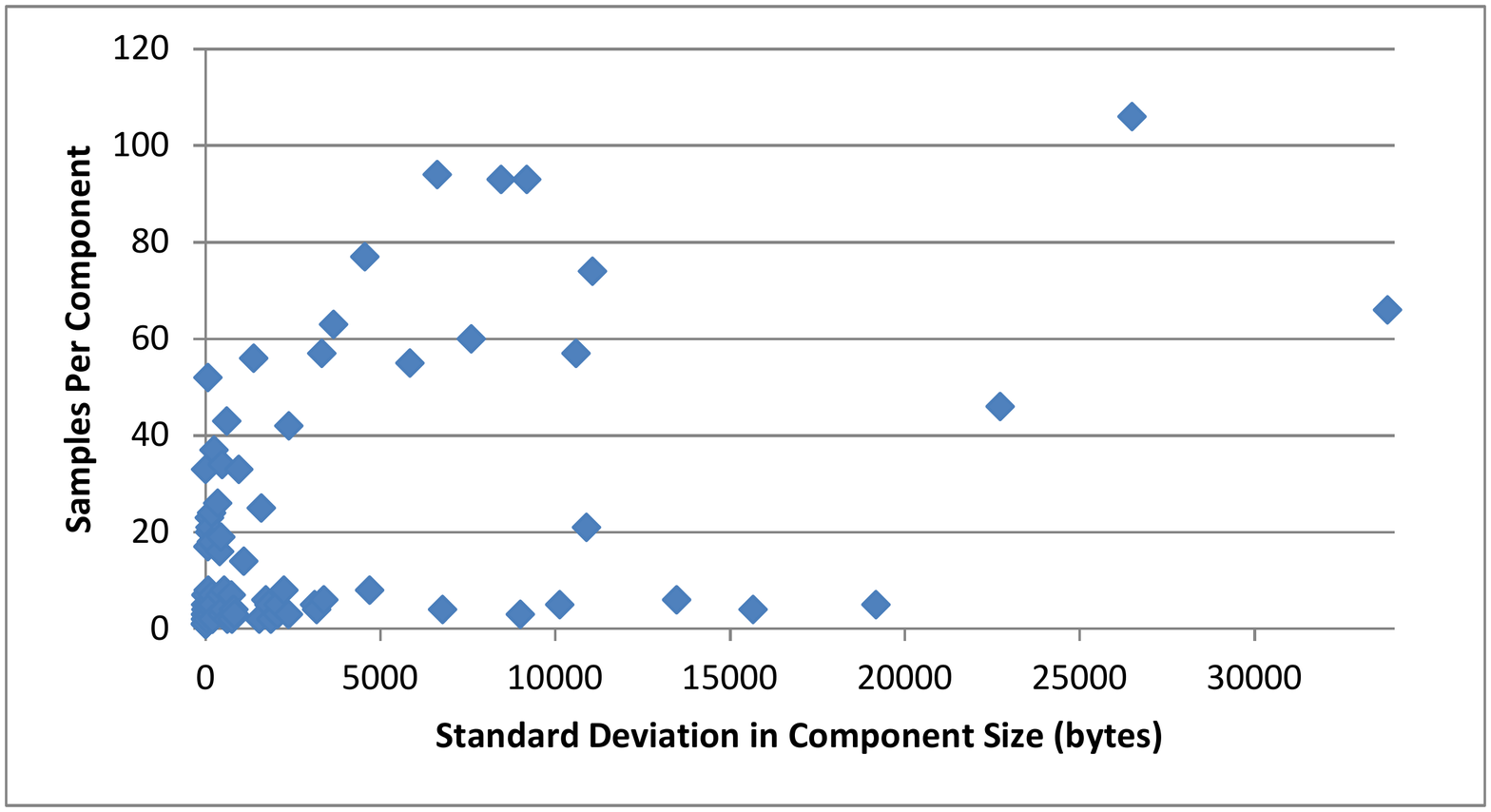}
\label{fig:darpavariance}
}
\caption[]{Histograms of the number of components found in each binary and the number of binaries per identified component for the wild malware. In addition, we also show the variation in component size as a function of the number of binaries containing each component.\label{fig:darpa}}
\end{figure}

We also performed component identification on a small data set of wild malware consisting of 136 wild malware binaries. We identified a total of 135 unique components in the data set. On an average 13 components were identified per malware binary. Fig.~\ref{fig:darpahist} shows the histogram of the number of components discovered per binary. As evident from the graph, the distribution is not uniform. Most malware binaries have few components, though some can have a very large number of shared components. In addition, we also show the number of binaries per identified component in Fig.~\ref{fig:comphist}. As can be seen, most components are only found in a few binaries. For example, 25\% of components are only found in a single binary, and thus would most likely not be of interest to a malware analyst (as components must be shared among malware binaries in our definition). 

In general, many of the identified components are similar in size (bytes), as shown in Fig.~\ref{fig:darpavariance}. In the figure, we plot the variance of the size of the instantiations in each of the 135 components against the number of binaries that contain the component. As can be seen, many of the binaries have low variance in their component size, indicating that it is likely that many of the components are representing the same shared function (components with large variation in observed instantiation size are likely false positive components). In addition, many of these low variance components are non-singleton components, meaning that the component has been observed in many malware binaries. While further investigation is needed to determine the exact function and purpose of these components, these results do indicate that our method is capable of extracting shared components in a corpus of wild malware.

The component identification analysis in the MAAGI system is clearly capability of identifying shared components across a corpus with high accuracy. As malware becomes more prevalent and sophisticated, determining the commonalities between disparate pieces of malware will be key in thwarting attacks or tracking their perpetrators. Key to the continued success of finding shared components in light of more complex malware is the adoption of even more sophisticated AI methods and concepts to assist analyst efforts.

\subsection{Lineage}

Many malware authors borrow source code from other authors when creating new malware, or will take an existing piece of malware and modify it for their needs. As a result, malware within a family of malware (i.e., malware that is closely related in function and structure) often exhibit strong parent--child relationships (or parents and children). Determining the nature of these relationships within a family of malware can be a powerful tool towards understanding how malware evolves over time, which parts of malware are transferred from parent to child, and how fast this evolution occurs. 

Analyzing the lineage of malware is a common task for malware analysis, and one that has always relied on artificial intelligence. For instance, Karim et al. used unsupervised clustering methods to construct a lineage of a well--known worm~\cite{karim2005malware}. Lindorfer et al. constructed lineages by relying on rule learning systems to construct high--level behaviors of malware. Most recently, Jang et al.~\cite{jang2013towards} also used unsupervised clustering to infer the order and a subsequent straight line lineage of a set of malware. While all of these methods have been shown to be effective, the MAAGI system relies heavily on AI techniques to mostly automate the entire process of inferring lineages with diverse structure.

In the MAAGI system, a lineage graph of a set of malware binaries is a directed graph where the nodes of the graph are the set of binaries input to the component, and the directed edge from binary A to B implies that binary B evolved partly from sample A (and by implication, was created at a later time). Lineage's can contain multiple rooted binaries, in addition to binaries that contain multiple parents. The lineage component operates over a family of malware, defined by the persistent clustering data structure. That is, all malware binaries in the same cluster constitute a family. While it is envisioned that a malware analyst would generally be interested in lineage of a particular family, there is nothing that precludes one from generating a lineage from binaries from multiple families. 

\subsubsection{Lineage as a Probabilistic Model}

To determine the lineage of malware, it is essential to know the order in which binaries were generated. Without this information, it would be hard to determine the direction of parent--child relationships. As such, the lineage of a set of binaries is conditioned upon the the creation times of each of the binaries. This knowledge allows us to create a simple, high--level probabilistic model that represents (a small part of) the generative process of malware. Fig.~\ref{lineageprob} shows the basic model.

\begin{figure}
\centering 
\includegraphics[width=.35\columnwidth]{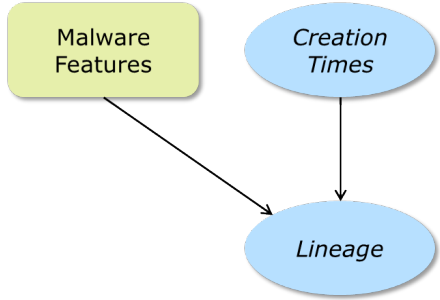}
\vspace{-5mm}
\caption{Lineage as a random variable conditioned upon the creation times of the malware and their features.\label{lineageprob}}
\end{figure}

The \textit{Lineage} variable represents a distribution over the possible lineages that can be constructed from a set of binaries, conditioned upon the \textit{Creation Times} variable and the \textit{Malware Features} variable. The \textit{Creation Times} variable represents a distribution over the dates that each binary was created. Finally, the \textit{Malware Features} is a deterministic variable that contains extracted features of the malware that is used as a soft constraint on the distribution of lineages. The more features that malware binaries share, the more likely that an edge in the lineage exists between them; the actual parent--child assignment of the two nodes depends upon the given creation times of the binaries. The lineage of a set of malware $M$ is then defined as

\begin{equation}
\label{lineageMax}
Lineage_M = \underset{Lineage_{M,i}}{argmax}\,\,P(Lineage_{M,i} | Features_M, Times_M)
\end{equation}

The compiler timestamp in the header is an indication of the time at which malware is generated. Unfortunately, it may be missing or purposely obfuscated by the malware author. However, it should not be ignored completely, as it provides an accurate signal in cases where it is not obfuscated. Therefore, we must infer the creation times of each malware binary from any available timestamp information, in addition to any other external information we have on the provenance of a malware binary. Fortunately, detailed malware databases exist that keep track of new occurrences of malware, and as a result we can also use the date that the malware was first encountered in the wild as additional evidence for the actual creation times of the binaries. 

One of the key insights in this formulation of the lineage problem is that the lineage of a set of malware and their creation times are joint processes that can inform each other; knowing the lineage can improve inference of the creation times, and knowing the creation times can help determine the lineage. As such, inferring the creation times and the lineage jointly can potentially produce better results than inferring the creation times first and conditioning the lineage inference on some property of the inferred creation times (e.g., such as the most likely creation times).

We break down the overall probabilistic model into two separate models, one to represent the creation times and another to represent the lineage construction process. The lineage inference algorithm explained in subsequent sections details how joint inference is performed using the two models. 

\paragraph{Creation Time Model}

\begin{figure}
\centering 
\includegraphics[width=.75\columnwidth]{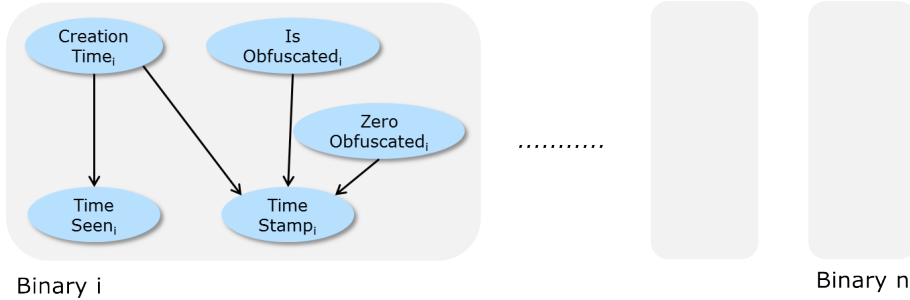}
\vspace{-5mm}
\caption{Learning a distribution over the creation times of each binary. We create a small probabilistic model for each malware that infers a lineage--independent distribution of its creation time using time stamp and time seen as evidence.\label{CreationTimeModel}}
\end{figure}

The creation time model for a set of malware binaries is shown in Fig.~\ref{CreationTimeModel}. For a set of $N$ binaries, we instantiate $N$ independent probabilistic models, one for each binary. While the creation times of each malware binary are \textit{not} truly independent, the dependence between the creation times of different binaries is enforced through the joint inference algorithm.

Each probabilistic model contains five variables. First, there is a variable to represent the actual creation time of the malware. There is also a variable to represent the time the sample was first seen in the wild, which depends upon the actual creation time of the sample. There is also a variable to represent the time stamp of the sample (from the actual binary header). This variable depends upon the creation time, as well as two additional variables that represent any obfuscation by the malware author to hide the actual creation time; one variable determines if the time stamp is obfuscated, and the other represents how the time stamp is obfuscated (either empty or some random value).

Evidence is posted to the time seen and time stamp variables and a distribution of each malware's creation time can be inferred. Note that the priors for the obfuscation variables and parameters for the conditional distributions can be learned and are discussed later.

\paragraph{Lineage Model}

\begin{figure}[t]
\centering 
\includegraphics[width=.75\columnwidth]{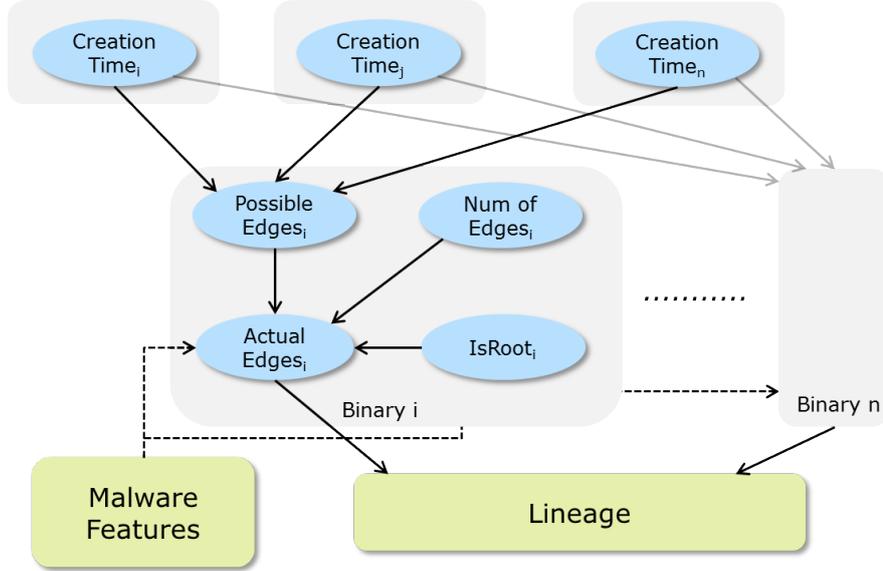}
\vspace{-1mm}
\caption{Lineage model given distributions of the creation times of each malware. The malware features are used as soft constraints on the inheritence relationships, and given the edges for each binary, constructing the lineage is deterministic\label{LineageModel}}
\end{figure}

The model for lineage is shown in Fig.~\ref{LineageModel}. For each malware binary, we create a set of variables that represent how the binary can be used in creating a lineage. First, there is a variable called $Possible Edges_i$, which represents the possible existence of edges between a binary $i$ and all the other binaries. This variable is deterministic conditioned upon the creation times of all the binaries; since a malware binary can only inherit from prior binaries, the value of this variable is simply all of the binaries with earlier creation times.  

There are also two variables that control the number of edges (i.e., parents) for each binary, as well as a variable that specifies whether a particular binary is a root in the lineage (and thus has no parents). Finally, there is a variable that represents a set of \text{actual} lineage edges of a binary $i$ which depends upon the possible edges of the binary, the number of edges it has, and whether it is a root binary. By definition, the values of the actual edges variable for all binaries defines the lineage over the set of malware (i.e., it can be deterministically construced from the edges and the creation times.)

In addition, the conditional probability distribution of the actual edges variable is constrained by the difference between the features of the binaries. That is, the higher similarity between two binaries, the more likely they are to have an edge between them. The similarity measure between binaries is based on the similarity measures used in other parts of the MAAGI system (clustering, component identification), and thus we do not discuss the measure in any further detail.

\subsubsection{Inference Algorithm}

\begin{figure}[t]
\centering 
\includegraphics[width=.75\columnwidth]{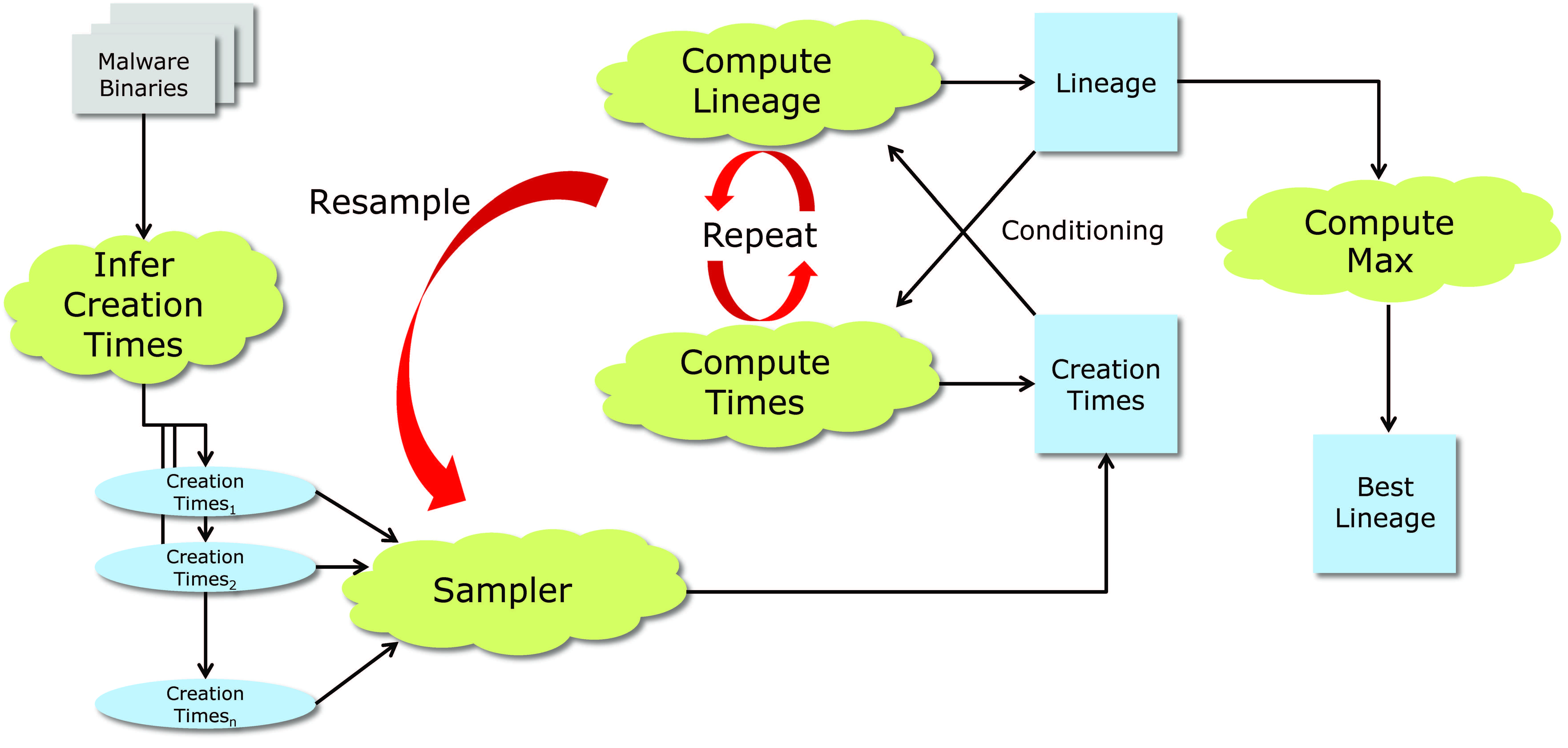}
\vspace{-1mm}
\caption{Inference algorithm to jointly infer the best lineage and creation times. Once the distribution of creation times is inferred, we sample a creation time for each binary, and iteratively maximize the lineage and creation time inference. The sampling process is repeated to avoid local maxima, the best lineage over all random starting points is selected as the algorithm output.\label{LineageAlg}}
\end{figure}

As shown in Eqn.~\ref{lineageMax}, the lineage that is provided by the MAAGI system to the user is the maximal probability lineage given the creation times and the malware features. Since the creation times are unknown, we must infer both the lineage and the creation times jointly. To accomplish this, we employed an iterative algorithm (similar to expectation--maximization) to jointly infer the most likely binary creation times and lineage. 

The algorithm runs as follows, and is also shown in Fig.~\ref{LineageAlg}:

\begin{itemize}
  \item \textbf{Infer a distribution over the creation time of each binary.} Using the observable time stamp and time seen in the wild information, we infer a distribution of the possible creation times of each binary. This distribution is still conditioned upon the lineage; this process simply marginalizes out some of the information not needed to compute a lineage (time stamps, first observations, obfuscation). 
  \item \textbf{Sample the creation times.} We take a sample from the creation time distributions of the malware binaries. This creates a fixed ordering over the binaries.
  \item \textbf{Infer the most likely lineage.} We infer the most likely lineage of the malware binaries given the fixed creation times. That is, we compute the lineage described in Eqn.~\ref{lineageMax}.  
  \item \textbf{Infer the most likely creation times.} We now infer the most likely creation times of the malware given the most likely lineage just inferred. The most likely creation times is defined as
  \begin{equation}
    Times_M = \underset{Times_{M,i}}{argmax}\,\,P(Times_{M,i} | Features_M, Lineage_M)
  \end{equation}
  Since we are conditioning on the previously computed lineage, we fix the inheritance relationship between two binaries, but the direction of the edge can change depending on the inferred creation times. This means that after finding the most likely creation times, some of the parent/child relationships in the lineage may have reversed.
  \item \textbf{Repeat steps 3 and 4 until convergence.} We repeat the dual maximization process until convergence, which can be defined as either a fixed number of iterations or until there are no more changes to both the mostly likely lineage and the creation times.
  \item \textbf{Repeat steps 2--5 until enough samples have been collected.} Since there is no guarantee that the maximization process converges on the global maximum, we restart the process to increase the likelihood that the global maximum is reached. Let $Lineage^j_M$ and $Time^j_M$ be the lineage and creation times computed on the $j^{th}$ restart of the maximization process. Then the lineage and creation times returned to the user is
  \begin{equation*}
    (Lineage_M, Times_M) = \underset{Lineage^j_M, Times^j_M}{argmax}\,\,P(Times^j_M, Lineage^j_M | Features_M)
  \end{equation*}
\end{itemize}

The algorithm is very similar to the expectation maximization algorithm, but we must re--sample several initial malware creation times to ensure that we are finding a satisfactory maximum value. At each iteration, we find the most probable lineage and then subsequently the most probable creation times. So every iteration of the algorithm increases the joint probability of both lineage and creation times. At the end of the algorithm, we select the lineage with the highest probability (from the multiple restarts of the algorithm) as the lineage of the set of malware.

\subsubsection{Learning}

Inferring the creation times of the malware depends upon some variable parameters which are generally unknown. For instance, we need to know the prior probability that a binary's time stamp has been obfuscated, and any parameters of the conditional probability of the time the malware was first seen in the wild given its creation time. To determine these parameters, we learn them on a training set of data, as shown in Fig.~\ref{LineageLearning}.

\begin{figure}[t]
\centering 
\includegraphics[width=.75\columnwidth]{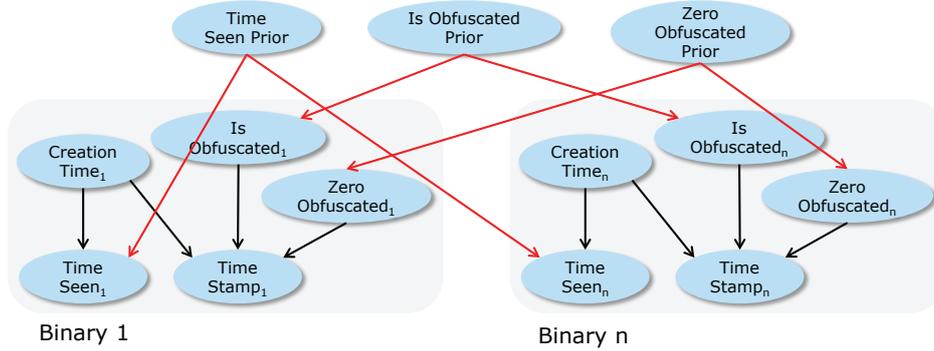}
\vspace{-5mm}
\caption{Learning the priors on the creation time model. The priors model the time between when a malware was created and when it is first seen in the wild, and the levels of obfuscation that are seen in malware.\label{LineageLearning}}
\end{figure} 

We create three prior variables, each of which is a beta distribution. The obfuscation variables that use the priors are bernoulli random variables, whereas the conditional probability distribution between the time seen and creation time is an exponential distribution (yet it still uses a beta distribution as its prior).

The learning process is as follows. We sample a value of the priors and compute the probability of the evidence of the model given the prior values, were the evidence is the time seen and time stamp information for each binary. This probability of evidence is used as a weighting factor in a maximization algorithm. That is, we explore the space of the priors and eventually converge upon the value of the priors that produces the highest probability of the observed evidence. Once the parameters are learned they can be used for any lineage on the malware.

\subsubsection{Implementation and Experiments}

We implemented the creation time model, the lineage model, and the learning model in Figaro, an open--source probabilistic programming language. All of the maximization used in the lineage analysis tool (lineage inference and learning priors) used Simulated Annealing to compute the most likely values. The Metropolis--Hastings inference algorithm was used to infer the distribution of creation times for the binaries.

\paragraph{Inferring the Creation Time Distributions}

The first step in our algorithm is to infer a distribution of the creation time for each malware binary. Although this distribution is used in the lineage algorithm, it can be computed and tested independently. Unfortunately, it is extremely difficult to determine the true creation time of wild malware binaries, hence we resorted to using synthetic data. 

For a set of $N$ synthetic binaries, we first created ground--truth lineages from randomly generated Prufer sequences~\cite{prufer1918neuer}. We then generated synthetic creation times for each synthetic binary with respect to the lineage constraints (i.e., parent time must be before child). We used a synthetic time scale from 1 to 10,000. We then generated the time seen information from each binary by modeling the distance between the creation time and time seen as either a geometric distribution or a uniform distribution. We also varied the obfuscation of the synthetic time stamps from 5\% of the data set to 95\% to replicate real--world conditions.

\begin{figure}
\centering 
\subfigure[Expected Error]{
\includegraphics[width=.3\textwidth]{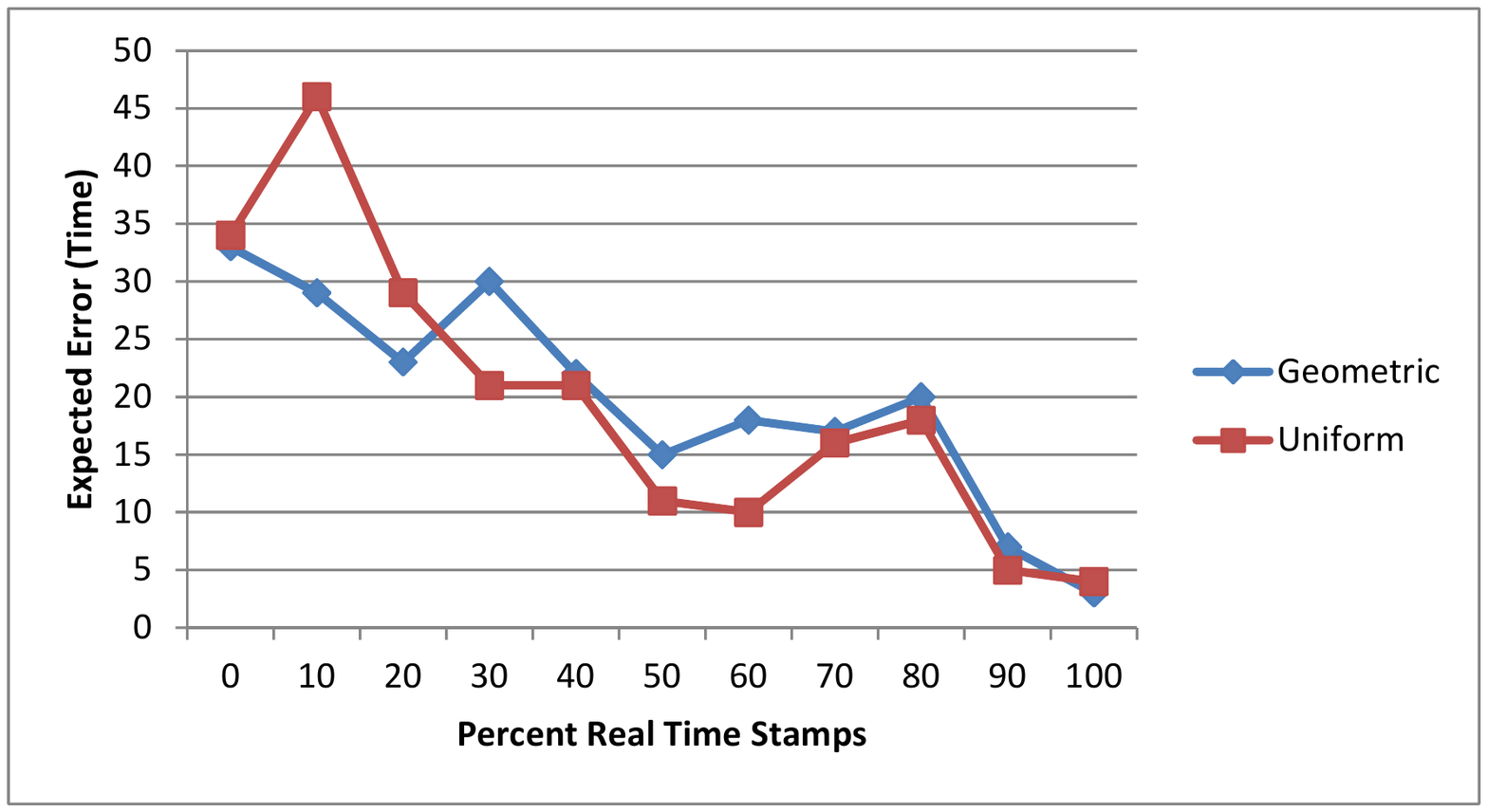}
\label{LineageTimeError}
}
\subfigure[Probability of Error]{
\includegraphics[width=.3\textwidth]{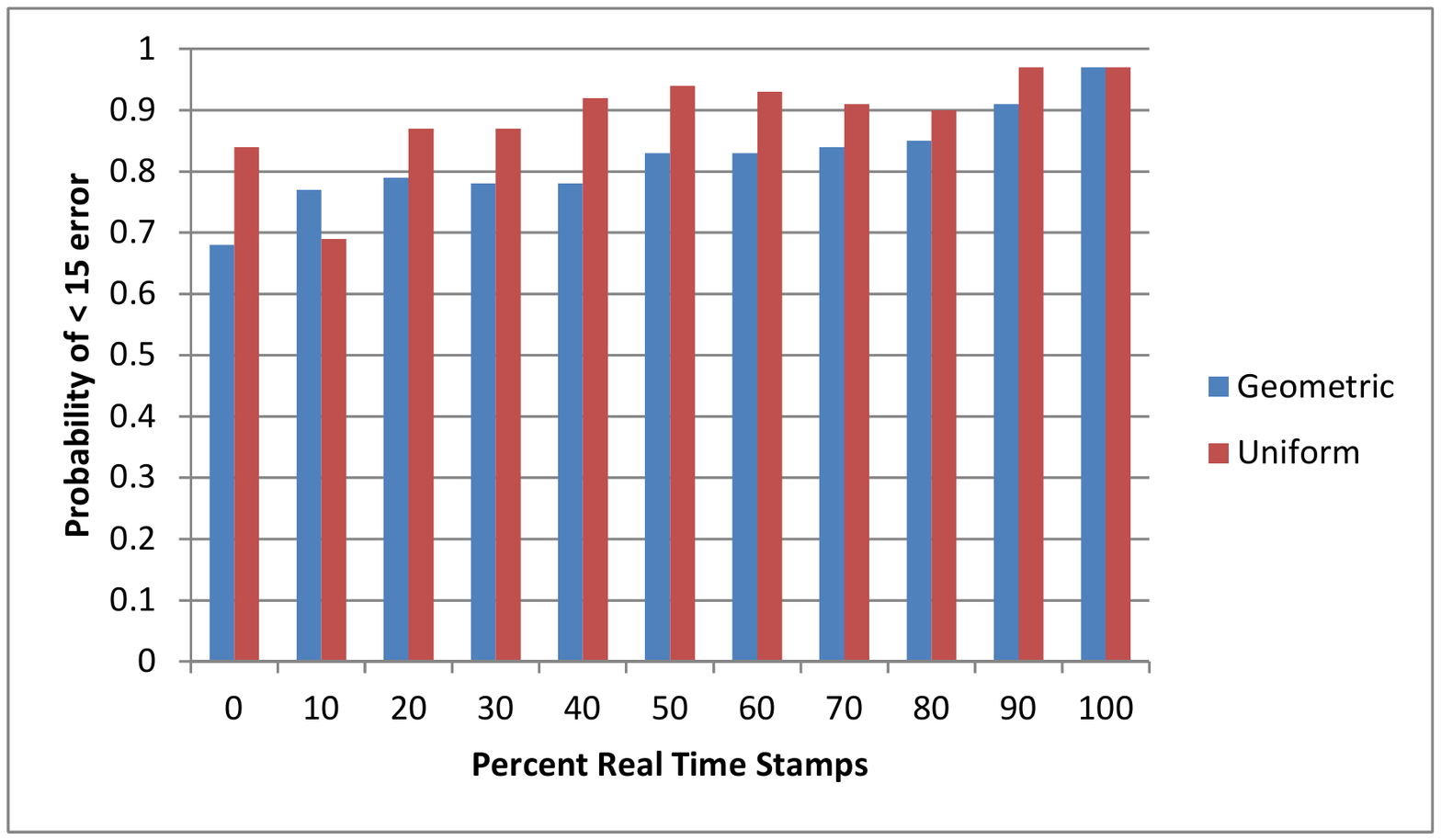}
\label{LineageTimeProb}
}
\subfigure[Reduction in Error after Lineage]{
\includegraphics[width=.3\textwidth]{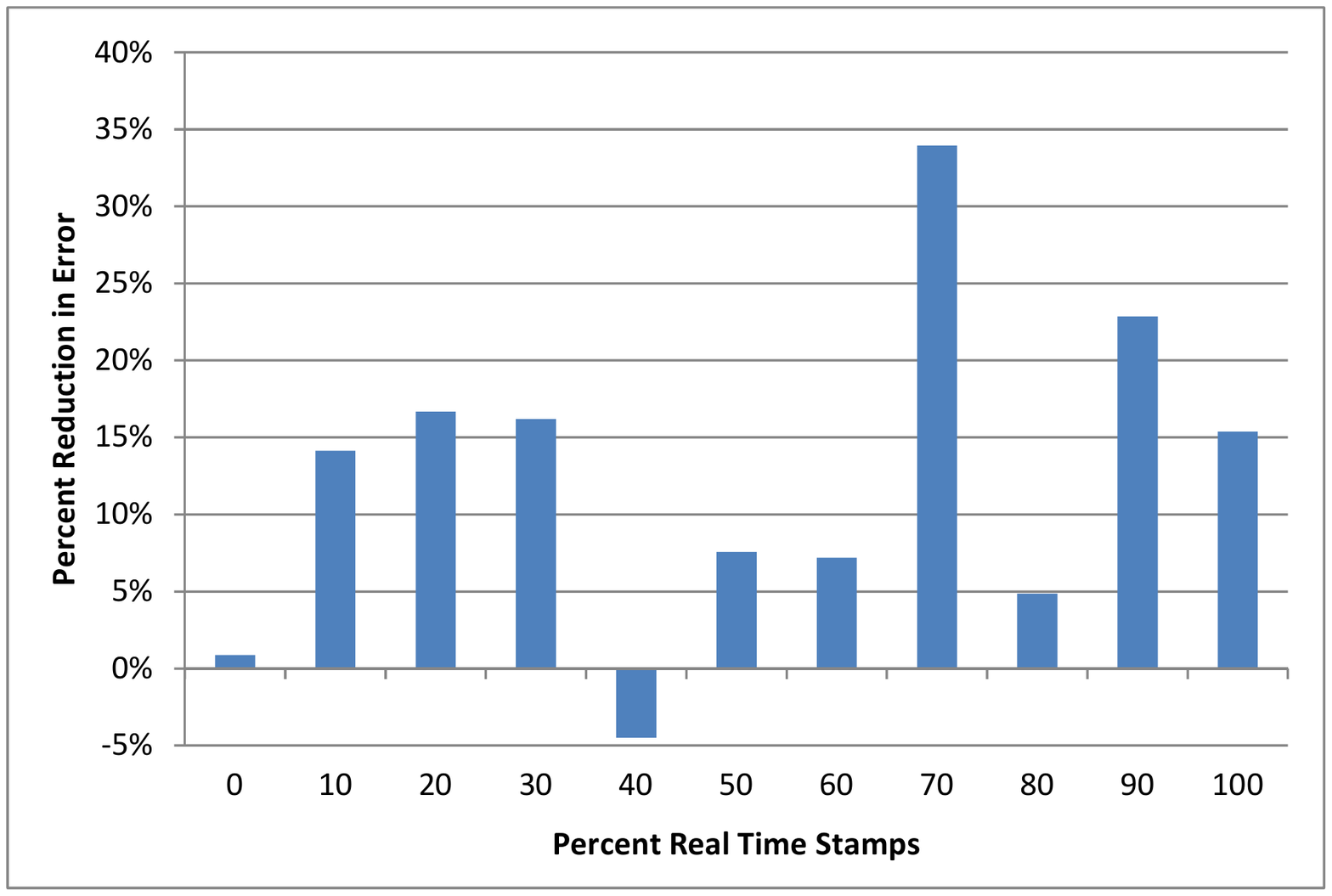}
\label{LineageTimeImprove}
}
\caption[]{Results of the creation time inference on the synthetic data set. \label{lineageTimeResults}}
\end{figure}

We generated 50 random lineages, and computed the expected error between the inferred time stamps and the actual time stamps. The results are shown in Fig.~\ref{LineageTimeError}. As can be seen, as the data set becomes less obfuscated, we are able to estimate the actual creation time of the binaries with higher accuracy. We also see that the expected errors are very similar using either a geometric or uniform distribution to generate the time seen information (and note that the algorithm has no knowledge of the generative model of time seen from creation time). 

To put this error in perspective, we also show the probability that the estimated creation time is within 15 time “units” of the actual creation time in Fig.~\ref{LineageTimeProb}. The number 15 is significant here as this the expected time between a parent to its child (this number was set during the random lineage generation). The uniform model of time seen generation has less error in this case since it has lower variance than the geometric. 

Finally, during the lineage phase of our algorithm, we iterate between improving the lineage and improving the creation time. So we wanted to determine the improvement of the creation time estimation before and after the lineage portion of the algorithm is run, shown in Fig.~\ref{LineageTimeImprove}. As can be seen, for most of the tests, the error in the creation time estimation was reduced after running the lineage estimation. This demonstrates that our initial lineage hypothesis is correct; lineage can inform the creation times as much as creation times can inform the lineage.

\paragraph{Inferring Lineage}

Next, we tested the actual lineage estimation algorithm. Since we do not have ground truth wild malware, we tested our algorithm on 17 lineages generated by MITLL. Since the binaries in this data set are clean room generated data (i.e., non--wild) they do not have data for the time they were first seen in the wild. To test our algorithm on this data, we had to create an artificial time system that generated new creation time, time stamp, and time seen data for the malware. This synthetic time generation was the same process as described for the creation time testing. Although the time data was artificial, it was still generated according to the ground--truth lineage. We also obfuscated a portion of the time stamps to make it more realistic. Fig.~\ref{LineageResults} shows the results of the test.

\begin{figure}
\centering 
\includegraphics[width=.5\columnwidth]{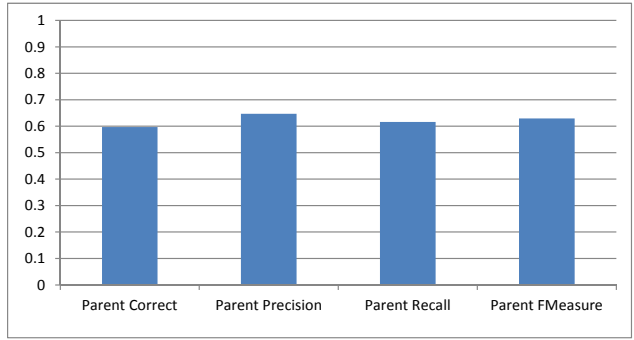}
\vspace{-1mm}
\caption{Results of the lineage estimation on the MITLL data. \label{LineageResults}}
\end{figure} 

The figure shows the accuracy of the lineage algorithm on four metrics. As can be seen, our lineage estimation algorithm can reconstruct lineages of malware with fairly high accuracy. These results demonstrate that our lineage analysis tool is able to reconstruct the lineage of a set of malware, and can be successfully used to analyze a set of wild malware.

As indicated by our experiments, AI tools and methods can be used to accurately estimate the lineage of a set of malware. This type of tool will prove invaluable to malware analysts in their quest to understand the provenance and evolution of malware. As malware authors increase the volume and sophistication of their attacks, we see AI playing a large role reconstructing increasingly complex lineages.

\subsection{Trend Prediction}

Currently, cyber defense is virtually always responsive. The attacker initiates an attack and the defenders need to find a way to respond to the attack, mitigate its damage, and prevent it from reoccurring, a process that invariable favors the attacker. Preemptive cyber defense has the potential to shift the advantage away from the attacker and on to the defender. In preemptive defense, the defender tries to predict future attacks and respond to them before they happen, yet requires an effective capability to predict future cyber developments. 

To date, most malware trend analyses are retrospective analyses of existing malware~\cite{cruz2014mcafee}, or are predictions that reflect the opinions of human experts and have a broad, high level focus~\cite{blanch2014mcafee,websense2012}. This type of prediction may not be of practical use to a malware analyst, as they are not able to quantify the threat level of a malware family or trend. A number of tools and systems~\cite{kang2012malware,deerman2012,juzonis2012specialized} have been proposed for detecting novelty or anomaly among individual malware binaries. These do not include the notion of a malware family, do not consider changes in the families over time, nor do they make long term predictions about the behavior of a family over time.

The goal of this analysis tool is to predict how malware families will evolve, and to determine which families will become the most harmful. Prediction is challenging since the space of possibilities is enormous and attackers try to be unpredictable. Nevertheless, the MAAGI trend analysis tool uses modern machine learning methods to make effective predictions even in difficult situations. As prediction about the behavior of a large family of malware is generally not useful (most of the damage has already been done), this tool only needs a single binary from a malware family to provide significant and actionable information about the future behavior of the family, such as the overall prevalence of the family. 

\subsubsection{Prediction Problem}
\label{predprob}

\begin{figure}[t]
\centering
\includegraphics[width=0.75\textwidth]{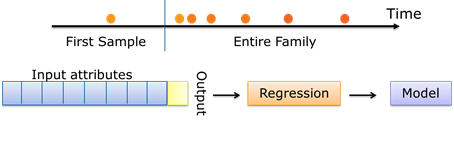}
\vspace{-1mm}
\caption[]{Setup of prediction experiment: Families are ordered temporally based on their first-seen date, then the earliest appearing sample in each family is used to predict its future characteristics}
\label{modelsetup}
\end{figure} 

When a new malware binary is encountered(i.e., a malware not clustered into an existing family), the MAAGI prediction tool attempts to predict various characteristics of the entire family from the single instance from the family, as shown in Fig.~\ref{modelsetup}. The tool first extracts a set of features from the binary which are used as the input values for the prediction algorithm. Attributes at the family level are ultimately derived from the features of the binaries they contain. Temporal aspects of the family are obtained from an external source such as VirusTotal. 

The list of features extracted from a malware binary is as follows:
\begin{enumerate}
 \item The number of blocks, library imports, strings, and procedures contained in a binary.
 \item The average length of a block in a binary.
 \item The number of rare and unique blocks, strings, imports and procedures contained in a binary. We consider a rare feature to be one that appears in less than 10 other binaries.
 \item The average procedure complexity of a binary, using a measure such as cyclomatic complexity.
 \item Whether the compiler timestamp is plausible.
 \item The similarity of a binary to representative binaries of a existing malware families. We use the similarity measure used in the hierarchical clustering technique described in~\ref{clustering}.
\end{enumerate}

We defined seven primary features of malware families that the prediction algorithm estimates from a single malware:

\begin{enumerate}

 \item \textit{Continued Growth} is a binary value which indicates whether or not any binaries are added to the family in the future. 
 \item The \textit{Future Size} of the family is the number of binaries which will be added to the family in the future. 
 \item The \textit{Distinctiveness} is the level of similarity between a family and all other families. 
 \item The \textit{Mobility} of a family is the degree to which it changes over time. Computationally, this is the level of similarity of the family in the present to itself in the future. 
 \item \textit{Sophistication} is the degree to which the binaries in the family become more complex. This is computed by computing the average number of blocks which the family contains.
 \item The \textit{Time Till Stability} is the time at which novel binaries are no longer being added to the family. This is determined by looking at the first--seen date of the oldest binary in a family.
 \item \textit{Search Engine Response} is the number of hits that a major search engine (Bing) returns for a query on the binary’s MD5. This value is averaged out across the family. 
\end{enumerate}

The MAAGI system can make predictions about these characteristics individually, or they can be combined together to define an impact score.

\subsubsection{Prediction Methods and Training}

The prediction framework consists a set of prediction algorithms applied to a malware binary supplied by the user. Before this can occur, however, we need to select a set of training data, a prediction model or algorithm, and a testing set of data. We first use a feature aggregation program to extract features from a feature database and create a dataset suitable as input to machine learning models. A set of families and binaries are provided as input to the aggregation program, along with a date range and cutoff time. Each model is trained on the input features of the earliest binary in each cluster (family) and predicts the corresponding output characteristics of the full family. The models are trained separately against each output. 

We have applied a variety of machine learning methods to learn the output characteristics of families from a single binary. This includes generalized linear models such as logistic\cite{hosmer2000introduction} and Poisson regression\cite{draper1981applied}; genetic algorithms learning formulas of input features\cite{schmidt2009distilling}; neural networks\cite{dreiseitl2002logistic}; support vector machines with sigmoid, polynomial, radial basis and linear kernels\cite{gunn1998support}\cite{CC01a}; CN2 rules\cite{clark1989cn2}, and classification trees and random forests\cite{quinlan1986induction}. Among the SVM models, we found that using the radial basis kernel exhibited the best results. 

Each of these machine learning algorithms is input to an ensemble learning process. This means that after we trained a first set of 'tier--1' models, we used the output values produced as inputs to a second tier of models. We refer to the second set as the `tier--2' models. There are temporal considerations here as well: We sub-divide the training and testing data a second time to ensure that none of the models in either tier are trained on families having data in the test set. 

In each experiment, the data is split into training and testing sets. The training set contains 70\% of the data set binaries, with the remaining 30\% for testing. However, care must be taken that we do not leak information about future binaries in the training data set. That is, we must ensure that all binaries in the test data set appear temporally after the oldest binary in the test data set. To ensure the temporal correctness of the data, we mark the point in time after 70\% of the binaries have appeared, and use families first before this point in time for training, and the remaining binaries for testing. We must account for changes in the features space with respect to time. Because we consider features such as the number of unique blocks or unique strings contained by a binary, we build the features incrementally. Features are counted as unique if they were unique when the binary first appeared. 
 
\subsubsection{Experiments and Results}

While the MAAGI system can make predictions about the features explained in Sec.~\ref{predprob}, we only detail the experiments that focused on the anticipated continued growth of the family. That is, these experiments detail a classification problem: Given a single binary, will the family ever grow beyond one?

\paragraph{Data Sets} We were provided by DARPA with a large dataset totaling over a million malware binaries. To obtain information about when a family was born, we used the family time stamps assigned by VirusTotal, an online virus database. We assume that the first binary in the family is the one with the earliest `first-seen' date. Many binaries did not have a `first-seen' field and were discarded. The dataset contained families which first appeared between 2006 and 2012, with the most binaries appearing in 2008 and 2009. The data contained a considerable amount of benignware and many binaries were packed (and thus could not be analyzed). For the experiments described here, we selected approximately 100k unpacked binaries with temporal information and families and supplied them to our hierarchical clustering technique described in Sec.~\ref{clustering}. 

\paragraph{Evaluation} We evaluated the models' accuracy, precision and recall. These metrics are all derived from the true positive (TP) and false positive (FP) rates of the classifier. Accuracy the fraction of correct predictions out of the total number of predictions. Precision is the fraction of actual positives out of the number predicted positives. Recall is the fraction of correctly predicted positives out of total number of actual positives. The calculations used to obtain these performance metrics are shown below: 

\begin{equation*}
\scriptsize
\begin{split}
  & Accuracy = (TP + TN) / (TP + FP + TN + FN)\\
  & Precision = TP / (TP + FP) \\
  & Recall = TP / (TP + FN) \\
\end{split}
\end{equation*}

Another useful metric for evaluation is the F-measure~\cite{powers2011evaluation}, which is defined as 

\begin{equation*}
	F = 2*\frac{Precision \cdot Recall}{Precision+Recall}
\end{equation*}
F-measure assesses the accuracy of a model by considering rate of true positives identified by the model. A higher value of F-measure indicates a better score.

We chose to target F-measure and recall as the most important metrics in our evaluation. Intuitively, we believe that correctly identifying a highly threatening family is more important than correctly classifying a sample as unimportant or benignware. We can tolerate a higher rate of false positives if the true positives are correctly identified.

\paragraph{Results}

To classify a single binary family as either one with continued growth or not, we first train a ‘tier--1’ set of classifiers on the original data. The results from these classifiers were added to the test data to create a new data set, which was used for 'tier--2' learning to see if we could learn a way to combine these results to perform better than any single learning algorithm.

As a baseline, we include the metrics obtained by a majority classifier which always assigned the most common occurring value in the dataset. In this data, the most commonly occurring classification is that the families remain singletons. Because the majority classifier always outputs the same value, it has a score of 0 for the metrics besides accuracy. 

Figures \ref{fig:classifier1} and \ref{fig:classifier2} show the performance of the models. The 'tier--1' classifiers are shown in Figure \ref{fig:classifier1}, followed by 'tier--2' in Figure \ref{fig:classifier2}. 

\begin{figure}
\centering 
\subfigure[Evaluation of tier 1 models]{
\includegraphics[width=.45\textwidth]{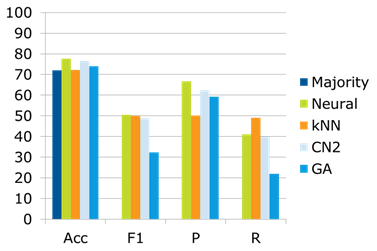}
\label{fig:classifier1}
}
\subfigure[Evaluation results of tier 2 models]{
\includegraphics[width=.45\textwidth]{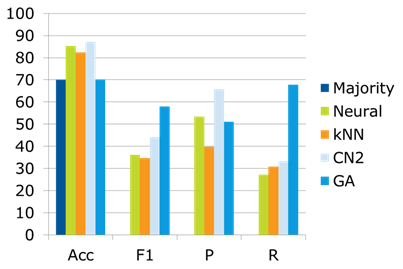}
\label{fig:classifier2}
}
\caption[]{Evaluation of the prediction tool.}
\end{figure}

The basic classifiers were able to perform only a mild level above the baseline majority classifier. However, the difference between the first and second tier classifiers shows promising results. Using an ensamble classifier based on several 'tier--1' classifiers, the MAAGI system is able to obtain better results for all metrics, and a significantly stronger value of recall using a genetic algorithm. In the malware defense domain, it is much more important to correctly recognize a significant family than an unimportant one, and so we feel this is a strong result.

The prediction analysis tool in the MAAGI system represents an important landmark towards cyber prediction and pre-emptive malware defense. To our knowledge, this capability is the first of its kind in malware analysis and defense, prediction of future malware developments can greatly improve cyber--defensive efforts. The MAAGI system has only scratched the surface of malware prediction, integration of more advanced artificial intelligence technology into the system could enable even more fruitful prediction possibilities.

\subsection{Functional Analysis}

The goal of functional analysis (FA) is to more thoroughly and deeply analyze malware with the goal of characterizing its function and context. Functional analysis detects multiple purposes that exist simultaneously in a malware sample. It uses structured, contextual reasoning to infer complex purposes. For example, credential theft is a complex purpose that involves credential capturing and data exfiltration, each of which may be realized in many ways. Included in this reasoning are basic characterizations of the attack and attacker, such as degree of sophistication.

Functional analysis uses only the results of static analysis. We have found that in many cases, a malware sample does not exhibit its behaviors in a sandbox, so we cannot rely on dynamic analysis. However, using only static analysis is more difficult, because API calls in the static code are not inherently ordered, so we cannot assume ordering of API calls in the analysis.

For functional analysis, we use parsing techniques derived from natural language processing (NLP). Because we only use static analysis results, we cannot directly use NLP methods that depend on the ordering of words in a sentence. We must treat a sentence using a “bag of words” model and try to identify components of meaning out of it. Also, just like in language we can analyze writing competence based on the sentence and vocabulary complexity, so in malware can we identify attack and attacker characteristics based on the words used and their organization.

Our representation is based on the Systemic Functional Linguistics, which organizes Grammars (SFGs) according to the function and context of language. They provide a functional hierarchy for meaning and purpose, and an overlaid context hierarchy for context. So, it is possible to reason about the purpose and context from the words. Similar mechanisms can be used to reason about function and characterizations of malware. One nice feature of SFGs for our purposes is that unlike most NLP frameworks, which require a fully ordered sentence, SFGs accommodate partial orderings by treating orderings as constraints. In domains such as malware where no ordering is available, we can simply not specify any ordering constraints.

Our approach to functional analysis is as follows:
\begin{enumerate}
  \item Static analysis creates the call graph and a set of searchable items, such as system API calls and strings.
  \item A lexification system extracts items of interest to functional analysis, such as particular strings that are indicative of specific functions. These items are annotated with semantic tags, where relevant.
  \item These unordered items are parsed using SFGs. This parsing process can produce multiple alternative parses. Each parse represents a different way of accomplishing an attack.
  \item We prioritize these alternative parses based on the minimum graph span in the call graph. This is based on the heuristic that related behaviors will be taken at times that are close to each other.
  \item We produce a human-readable report that summarizes the results.
\end{enumerate}

\subsubsection{Functional Analysis Design}

\begin{figure}
\centering 
\includegraphics[width=.9\columnwidth]{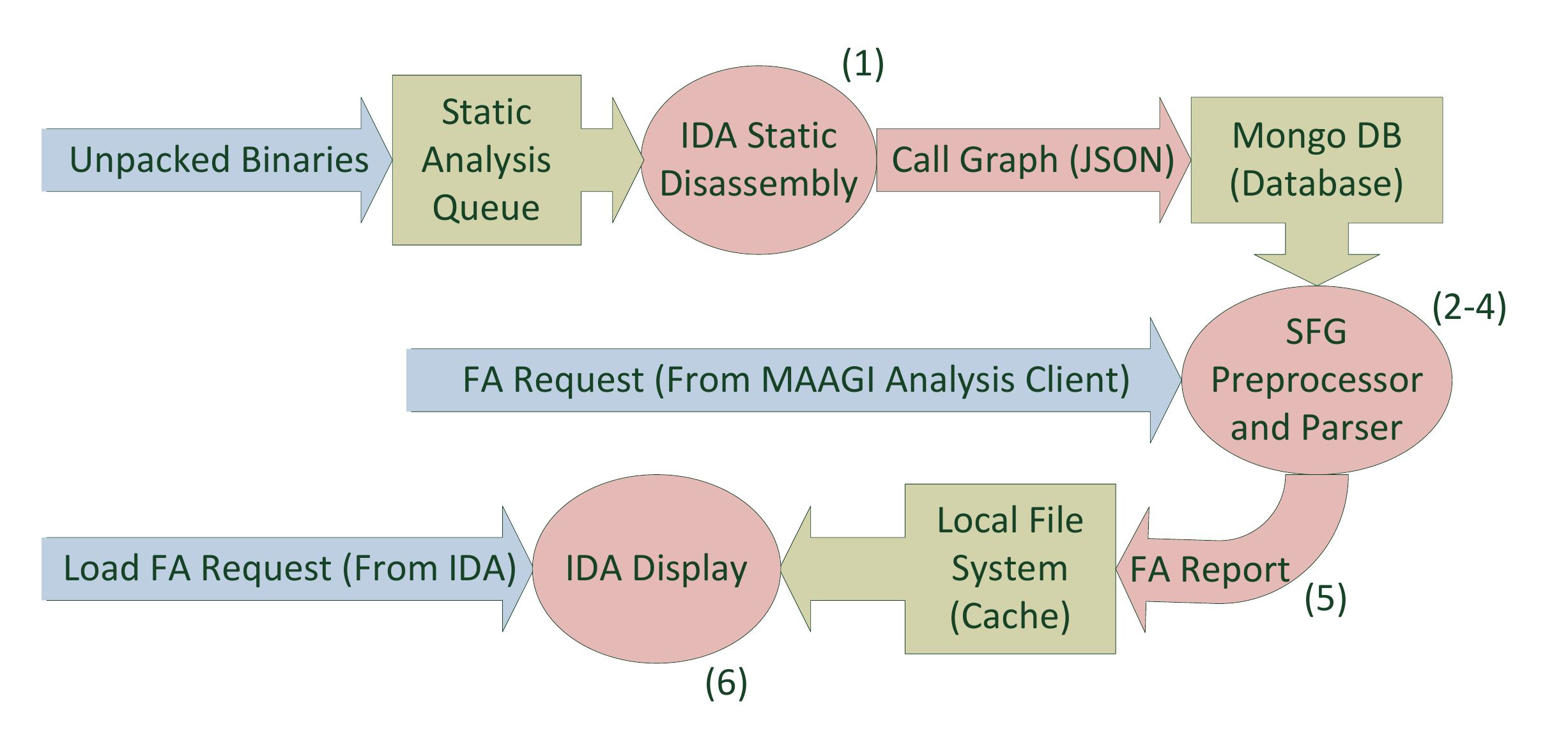}
\vspace{-5mm}
\caption{Functional analysis workflow.\label{faworkflow}}
\end{figure}

\begin{figure}
\centering 
\includegraphics[width=.7\columnwidth]{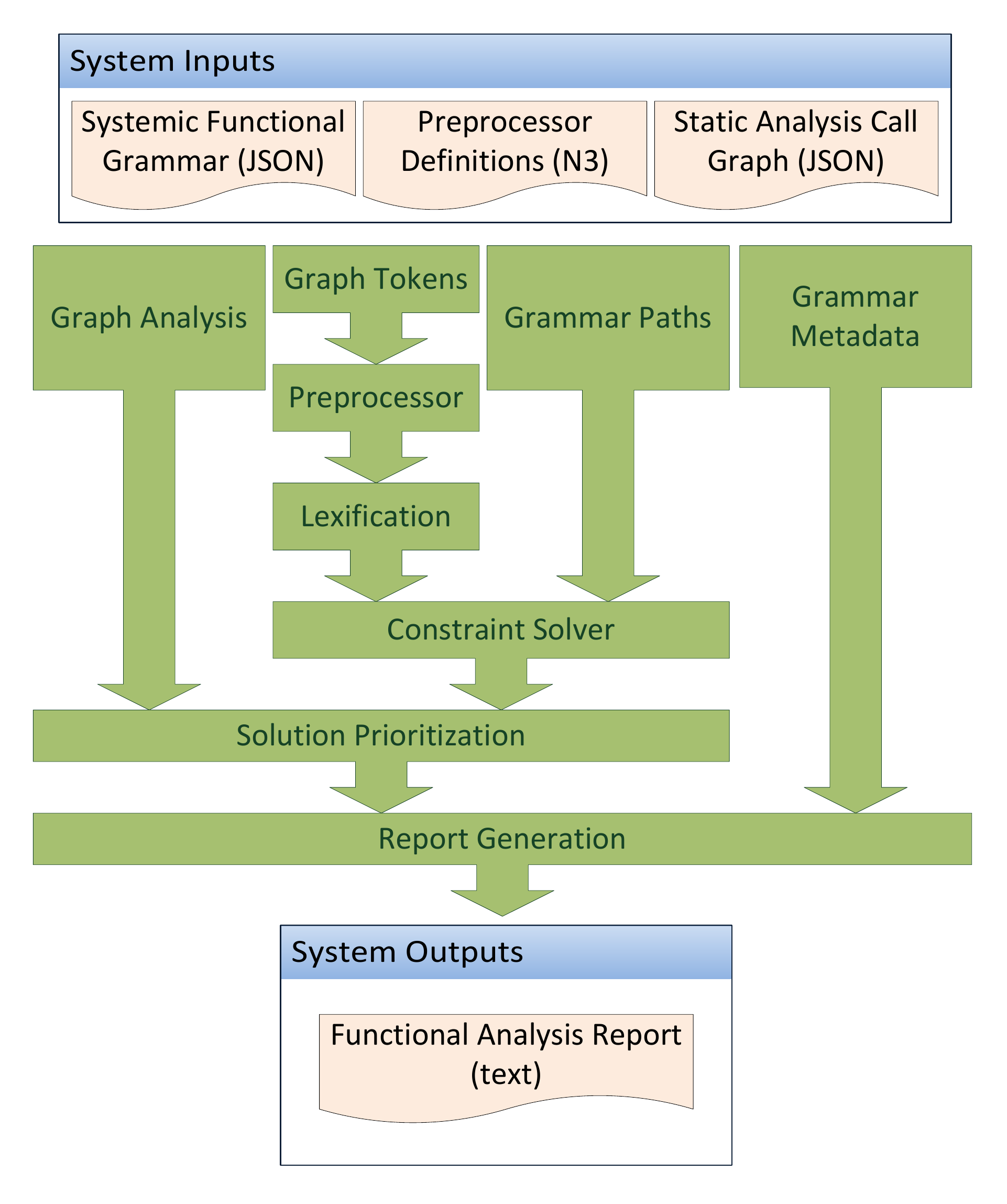}
\vspace{-5mm}
\caption{Functional analysis workflow.\label{fadataflow}}
\end{figure}

Fig.\ref{faworkflow} describes the functional analysis workflow. Creation of a static analysis call graph, Step 1, above, is transparently triggered whenever a sample is submitted for static analysis. Steps 2-5: Lexical Analysis, Parsing, Parse Prioritization, and report generation; are initiated by a request from the user, and occurs within the functional analysis software component (detailed in the next section).  
  

As seen in Fig.~\ref{fadataflow}, the system performs analysis in several distinct stages, each passing its results to the next. The system takes as inputs three sets of data. The first two represent domain knowledge and the inferences that the system is attempting to make. That is, the Systemic Functional Grammar which captures generalizations and characterizations of functions of malicious software, and the Preprocessor Definitions file, which supplied preprocessing instructions and grammar metadata. These files represent domain knowledge and tend to be static unless an analyst wishes to introduce new domain knowledge. The third input, a Static Analysis Call Graph, is generated by the static analysis pipeline for each sample to be analyzed.
 
The output of the system is a report which summarizes the parses that were satisfied by the tokens found within the call graph. This report is human readable and is displayed in the MAAGI user interface, in addition to being able to export and loaded the report into external malware analysis tools (such as IDA, a disassembler). The current templates for report generation make use of IDA’s context--aware console window to provide active content. Active content allows users to click within the console window and be directed to the binary location which contains identified functions.

\subsubsection{Grammar Representation}

In the context of the MAAGI system, a grammar is composed of two interconnected \textit{strata} – a \textit{grammatical stratum} and a \textit{contextual stratum}. The contextual stratum and the grammatical stratum are structurally similar, with each an instance of a system network.  Each plays a different role in the overall representation, but both are built from the same set of components. We describe that structure below.

\paragraph{System Networks}

\begin{figure}
\centering 
\includegraphics[width=.6\columnwidth]{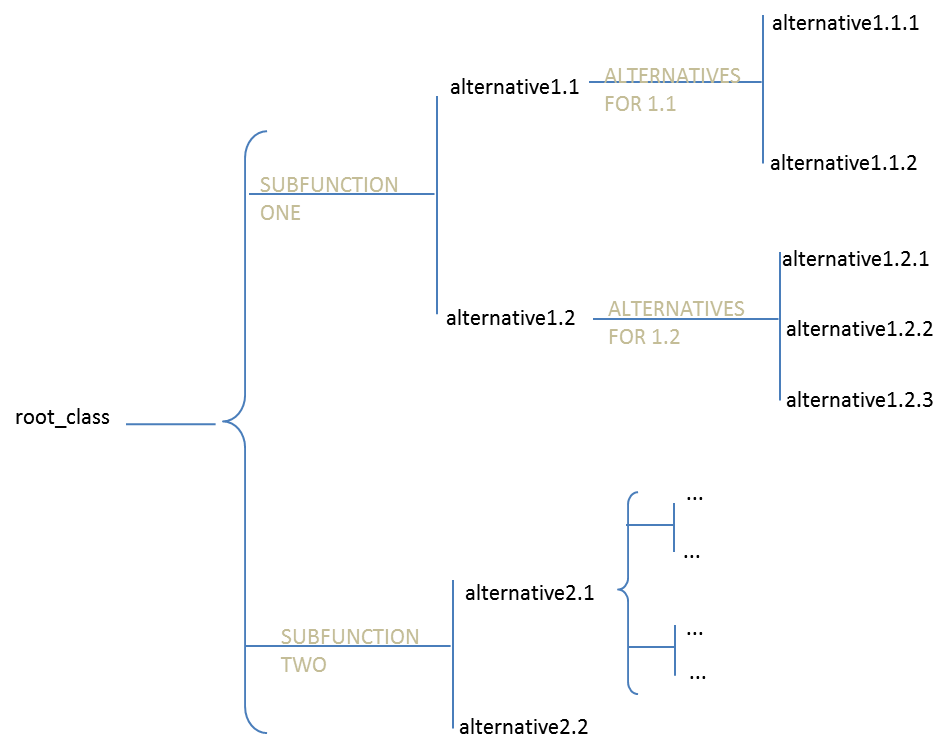}
\vspace{-5mm}
\caption{Functional analysis workflow.\label{exampleNetwork}}
\end{figure}

\begin{figure}
\centering 
\includegraphics[width=.6\columnwidth]{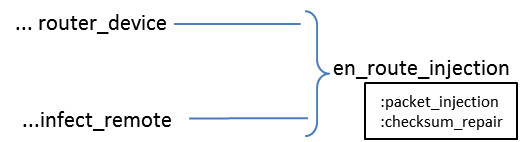}
\vspace{-5mm}
\caption{Functional analysis workflow.\label{otherExample}}
\end{figure}

A system network describes a hierarchy of systems, which represent exclusive alternatives that might hold in the presence of a given feature in the input. For example, if a sentence contains a noun, that noun can be countable or uncountable, common or proper, concrete or abstract, etc. These are orthogonal properties that can each be represented by a system with “noun” as the entry condition. An example system network is shown in Fig.~\ref{exampleNetwork}.

As noted in the figure, orthogonal choices (indicated by curly braces) represent an "AND" relationship. In the case in Fig.~\ref{exampleNetwork}, for example, this indicates that a choice must be selected in both the "SUBFUNCTION ONE" and "SUBFUNCTION TWO" systems. Exclusive choices (indicated by “T” shaped branches) represent an exclusive choice or "XOR" relation. At most one feature class must be satisfied for each such instance in the grammar. A system network optionally defines forward chaining rules, or gates, which take multiple feature classes as entry conditions (described in the rules section below).

The grammar in the MAAGI system is composed of two system networks (grammatical and contextual), as indicated above. We refer to these as strata, as they represent two distinct, but interrelated levels of abstraction for viewing the input. The grammatical stratum is what many people mean when they use the term “grammar” in SFG.  It contains a hierarchy of functional choices.  These functional choices are realized when the constraints, i.e. realization rules, associated with them hold for a given set of input. A valid parse is said to exist when there is a path to a leaf where all realization rules hold for every orthogonal choice in the stratum. The contextual stratum uses the same notations as the grammatical stratum but it represents a hierarchy of the salient contexts. These contexts may alter the likelihood or validity of particular choices in the grammatical stratum.

\paragraph{Realization Rules}

Realization rules are constraints which define which elements in the input correspond to given functional roles, the mappings or equivalences between those roles and others, as well as the order in which those roles must appear. Some example rules include:
\begin{itemize}
  \item Lexification: Lexification draws a correspondence between a string present in the input and a textual role. 
  \item Insertion: Insertion requires that a particular role must be observed in the data. Typically the role is one that is lexified directly, or one that is lexified indirectly via expansion. This is used as a mechanism for representing expectation.
  \item Expansion: Expansion allows for groups of roles to be referred to collectively.
  \item Preselection: Preselection is used to map contextual classes to grammatical classes. In such cases, preselections define a many-to-many mapping between contextual classes and grammatical classes.
  \item Gates: Gates are forward chaining rules that represent classes that stem from an arbitrary Boolean combination of other classes. They are directly analogous to logic gates. For example, in Fig.~\ref{otherExample}, the class "gate1" is inferred if alternative1.2 is chosen with either alternative2.1 or alternative2.2.
\end{itemize}

Other rules include definitions for Adjacency, Partition, and Conflation.

\subsubsection{Parsing and Classification}

In essence, SFG strata are analogous to decision trees, with realization rules serving as queries for establishing which conditions (vis a vis features) are observed in a given set of input. Each permutation of choices in the contextual stratum represents a possible context in a given domain, and the permutation of choices in the grammatical stratum that remain after preselections from that context are applied represents the space of possible interpretations of the data for the context. Conversely, given an interpretation that is known to hold, we can infer the contexts that support it. We refer to the set of features in an interpretation and its context as a traversal. The set of realization rules associated with a traversal are the set of queries that must be satisfied in order for the features of that traversal to be valid for the input data. As noted above, these rules take the form of constraints, and thus a traversal describes a Boolean expression over features. 

The MAAGI system classifies an attack by generating traversals through the grammatical stratum of the grammar, constructing the associated expression from the queries associated with its features, and passing the result to a constraint solver (Copris) (CITE). From the set of possible valid traversals, we can then infer the possible contexts in which the associated attacks could have been carried out. Multiple realizations can exist, however. For this reason, realizations are ordered by the average proximity of the binary's API calls lexified in the traversal. Applicable contexts to each are ordered heuristically.

\subsubsection{Testing and Experimentation}

The functional analysis in MAAGI has been able to reliably operate over a set of 283 unpacked Banker-family malware samples (CITE) while providing accurate characterizations of its command and control behavior, as well as its theft of personal information. After capturing a large number of alternative approaches for these and related malicious activates, we have been able to note some of the limits and successes of functional analysis in MAAGI.

Presently, the MAAGI Systemic Functional Grammar (SFG) Parser and grammar representation language are capable of expressing an impressive variety of malware behaviors and actions. In our current environment, however, the lack of data--flow and temporal information can limit the variety of behaviors expressed. When two behaviors differ only with respect to data--flow or runtime temporal characteristics, the grammatical representation of these two behaviors is identical. Often, this information is the key discriminator between a malicious sample and a benign sample with the same capabilities. Currently, the system compensates for the lack of this data by using call graph distance as a proxy for temporal proximity. This issue should not be interpreted as a limitation of functional analysis, but rather as a limitation of using modern static analysis for producing tokens/annotations later consumed by the parser. Temporal and data flow features can absolutely be represented in the grammar and could be leveraged to better discriminate samples with meaningful differences with respect to those features. However, it is difficult or impossible to glean this information from Static Analysis alone. 

Despite these limitations, the functional analysis tool will provide malware analysts the ability to understand the behavior and function of the malware. Incorporating more advanced AI techniques, however, will allow the user to fully take advance of the wealth of information that can be extracted from malware using SFGs.

\section{Conclusion}

In this paper, we introduced the MAAGI malware analysis system. This system takes advantage of many of the artificial intelligence advances over the past several decades to provide a comprehensive system for malware analysts to understand the past, present and future characteristics of malware.

As a practical application of artificial intelligence, the MAAGI system will have an enormous impact on the cyber--defense community. While the cyber--defense community has always employed artificial intelligence methods to analyze malware, to the best of our knowledge, nobody has attempted to integrate these technologies into a single system where various analyses use the results of other analyses to model, learn and predict characteristics of malware. It is our hope that the MAAGI system also inspires other innovative applications of artificial intelligence to the cyber--defense community. Towards this goal, the MAAGI system is also expandable, and we envision that new and intelligent methods to analyze malware will be incorporated into the system in the future, giving cyber--defenders a powerful and continually evolving tool to combat malware--based attacks.

\section{Acknowledgments}
This work was supported by DARPA under US Air Force contract FA8750-10-C-0171, with thanks to Mr. Timothy Fraser. The views expressed are those of the authors and do not reflect the official policy or position of the Department of Defense or the U.S. Government.


\bibliography{MAAGI_aij}

\end{document}